\newtheorem{claimb}{Claim}
\newcommand{\W}{{\sf W}\xspace}
\newcommand{\NP}{{\sf NP}\xspace}
\newcommand{\np}{{\sf NP}\xspace}
\newcommand{\FPT}{{\sf FPT}\xspace}
\newcommand{\fpt}{{\sf FPT}\xspace}
\newcommand{\XP}{{\sf XP}\xspace}
\newcommand{\ETH}{{\sf ETH}\xspace}
\newcommand{\SETH}{{\sf SETH}\xspace}
\newcommand{\tw}{{\sf tw}}
\newcommand{\rw}{{\sf rw}\xspace}
\newcommand{\cut}[1]{\left( #1,\overline{#1} \right)}
\renewcommand{\deg}{{\sf deg}\xspace}
\newcommand{\Ocal}{{\mathcal O}}
\newcommand{\Acal}{{\mathcal A}}
\newcommand{\sbullet}{\,\begin{picture}(-1,1)(-1,-2.5)\circle*{2}\end{picture}\ }
\newcommand{\PartIntSup}[1]{\left\lceil #1\right\rceil}
\renewenvironment{proof}[1][]{\par \noindent {\bf Proof:#1}\ }{\hfill$\Box$\medskip}
\newenvironment{proofclaim}[1][]{\par \noindent {\bf Proof:#1}\ }{\hfill$\lrcorner$\medskip}
\newcommand{\mes}{{\sf mes}}
\newcommand{\mos}{{\sf mos}}
\newcommand{\Xo}{\chi_{{\sf odd}}}
\title{On the complexity of finding large odd induced subgraphs and odd colorings\thanks{Work supported by French projects DEMOGRAPH (ANR-16-CE40-0028), ESIGMA (ANR-17-CE23-0010), and ELIT (ANR-20-CE48-0008-01), the program ``Exploration Japon 2017'' of the French embassy in Japan, and the JSPS KAKENHI grant number JP18K11157. An extended abstract of this article appeared in the \emph{Proceedings of the 46th International Workshop on Graph-Theoretic Concepts in Computer Science (\textbf{WG}), volume 12301 of LNCS, pages 67-79, held online, June \textbf{2020}}. This article is permanently available at \url{https://arxiv.org/abs/2002.06078}.}
}
\authorrunning{R\'emy Belmonte and Ignasi Sau}
\titlerunning{Finding large odd induced subgraphs and odd colorings}
\author{R\'emy Belmonte\inst{1} and Ignasi Sau\inst{2}$^{,*}$}
\institute{
	University of Electro-Communications, Chofu, Japan\\
	\email{remy.belmonte@gmail.com}
\and
	LIRMM, Universit\'e de Montpellier, CNRS, Montpellier, France\\
		\email{ignasi.sau@lirmm.fr}
}
\begin{document}

\maketitle

\begin{abstract}
We study the complexity of the problems of finding, given a graph $G$, a largest induced subgraph of $G$ with all degrees odd (called an \emph{odd} subgraph), and the smallest number of odd subgraphs that partition $V(G)$. We call these parameters $\mos(G)$ and $\Xo(G)$, respectively. We prove that deciding whether $\Xo(G) \leq q$ is polynomial-time solvable if $q \leq 2$, and \NP-complete otherwise.
We provide algorithms in time $2^{\Ocal(\rw)} \cdot n^{\Ocal(1)}$ and $2^{\Ocal(q \cdot \rw)} \cdot n^{\Ocal(1)}$ to compute $\mos(G)$ and to decide whether $\Xo(G) \leq q$ on $n$-vertex graphs of rank-width at most $\rw$, respectively, and we prove that the dependency on rank-width is asymptotically optimal under the \ETH. Finally, we give some tight bounds for these parameters on restricted graph classes or in relation to other parameters.

\keywords{odd subgraph; odd coloring; rank-width; parameterized complexity; single-exponential algorithm;  Exponential Time Hypothesis.}
\end{abstract}

\section{Introduction}
\label{sec:intro}

Gallai proved, around 60 years ago, that the vertex set of every graph can be partitioned (in polynomial time) into two sets, each of them inducing a subgraph in which all vertices have even degree (cf.~\cite[Exercise 5.19]{Lov79}). Let us call such a subgraph an \emph{even} subgraph, and an \emph{odd} subgraph is defined similarly. Hence, every graph $G$ contains an even induced subgraph with at least $|V(G)|/2$ vertices. The analogous properties for odd subgraphs seem to be more elusive.  For a graph $G$, let $\mos(G)$ and $\Xo(G)$ be the order of a largest odd induced subgraph of $G$ and the minimum number of odd induced subgraphs of $G$ that partition $V(G)$, respectively. Note that for $\Xo(G)$ to be well-defined, each connected component of $G$ must have even order.

\smallskip

Concerning the former parameter, the following long-standing --and still open-- conjecture is cited as ``part of the graph theory folklore'' by Caro~\cite{Caro94}: there exists a positive constant $c$ such that every graph $G$ without isolated vertices satisfies $\mos(G) \geq c \cdot |V(G)|$. In the following discussion we only consider graphs without isolated vertices. Caro~\cite{Caro94} proved that $\mos(G) \geq (1 - o(1))\sqrt{n/6}$ where $n = |V(G)|$, and Scott~\cite{Scott92} improved this bound to $\frac{c n }{\log n}$ for some $c >0$. The conjecture has been proved for particular graph classes, such as trees~\cite{RadcliffeS95}, graphs of bounded chromatic number~\cite{Scott92}, graphs of maximum degree three~\cite{Berman0W97}, and graphs of tree-width at most two~\cite{HouYLL18}, also obtaining  best possible  constants.

As for the complexity of computing $\mos(G)$, Cai and Yang~\cite{CaiY11} studied, among other problems, two parameterized versions of this problem, and their reductions imply that it is \NP-hard. They also prove the \NP-hardness of computing the largest size of an even induced subgraph of a graph $G$, denoted $\mes(G)$. As a follow-up of~\cite{CaiY11}, related problems were studied by Cygan et al.~\cite{CyganMPPS14} and Goyal et al.~\cite{GoyalMPPS18}.

\smallskip

The parameter $\Xo$, which we call the \emph{odd chromatic number}, has attracted much less interest in the literature. To the best of our knowledge, it has only been considered by Scott~\cite{Scott01}, who defined it (using a different notation) and proved that the necessary condition discussed above for $\Xo(G)$ to be well-defined is also sufficient. He also provided lower and upper bounds on the maximum value of $\Xo(G)$ over all $n$-vertex graphs. In particular, there are graphs $G$ for which $\Xo(G) = \Omega(\sqrt{n})$.

\medskip
\noindent \textbf{Our contribution.} In this article we mostly focus on computational aspects of the parameters $\mos$ and $\Xo$. Note that, given a graph $G$, deciding whether $\Xo(G) \leq 1$ is trivial. We prove that deciding whether $\Xo(G) \leq q$ is \NP-complete for every $q \geq 3$ using a reduction from $q$-\textsc{Coloring}. We obtain a dichotomy on the complexity of computing $\Xo$ by showing that deciding whether $\Xo(G) \leq 2$ can be solved in polynomial time, through a reduction to the existence of a feasible solution to a system of linear equations over {\sf GF}[2].

Given the \NP-hardness of computing both parameters, we are interested in its parameterized complexity~\cite{DF13,FPT-book}, namely in identifying relevant parameters $k$ that allow for \FPT algorithms, that is, algorithms running in time $f(k) \cdot n^{\Ocal(1)}$ for some computable function $f$. Since the natural parameter, that is, the solution size, for $\mos$ has been studied by Cai and Yang~\cite{CaiY11} (and its dual as well), and  for $\Xo$ the problem is para-\NP-hard by our hardness results, we rather focus on {\sl structural parameters}. Two of the most successful ones are definitely tree-width and clique-width, or its parametrically equivalent parameter \emph{rank-width} introduced by Oum and Seymour~\cite{OumS06}. This latter parameter is {\sl stronger} than tree-width, in the sense that graph classes of bounded tree-width also have bounded rank-width; see Fig~\ref{fig:parameters}. We present algorithms running in time $2^{\Ocal(\rw)} \cdot n^{\Ocal(1)}$ for computing $\mes(G)$ and $\mos(G)$ for an $n$-vertex graph $G$ given along with a decomposition tree of width at most $\rw$, and an algorithm in time $2^{\Ocal(q \cdot \rw)} \cdot n^{\Ocal(1)}$ for deciding whether $\Xo(G) \leq q$. These algorithms are inspired by the ones of Bui-Xuan et al.~\cite{Bui-XuanTV10,Bui-XuanTV11} to solve \textsc{Maximum Independent Set} parameterized by rank-width and boolean-width, respectively. To the best of our knowledge, our algorithms are the first ones parameterized by rank-width for an \NP-hard problem running in time $2^{o(\rw^2)} \cdot n^{\Ocal(1)}$~\cite{abs-1805-11275,Bui-XuanTV10,Oum17,GanianHO13a,GanianHO13b}.

\begin{figure}[h!]
  \begin{center}
  \vspace{-.25cm}
    \includegraphics[width=.4\textwidth]{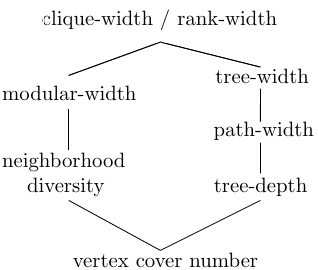}
  \end{center}\vspace{-.25cm}
  \caption{Diagram of some famous graph parameters (not defined here), where a parameter $\pi$  is above a parameter $\pi'$ if there exists a function $f: \mathds{N} \to \mathds{N}$ such that, for every integer $k \geq 0$, $\{ G \mid \pi'(G) \leq k \} \subseteq \{ G \mid \pi(G) \leq f(k) \}$.}
  \label{fig:parameters}
  \vspace{-.25cm}
\end{figure}

We also show that the dependency on rank-width of the above algorithms is asymptotically optimal under the Exponential Time Hypothesis (\ETH) of Impagliazzo et al.~\cite{ImpagliazzoPZ01,ImpagliazzoP01}.  For this, it suffices to obtain a {\sl linear} \NP-hardness reduction from a problem for which a subexponential algorithm does not exist under the \ETH. While our reduction to decide whether $\Xo(G)\leq q$ already satisfies this property, the \NP-hardness proof of Cai and Yang~\cite{CaiY11} for computing $\mes(G)$ and $\mos(G)$, which is from the \textsc{Exact Odd Set} problem~\cite{DowneyFVW99}, has a {\sl quadratic} blow-up, so only a lower bound of $2^{o(\sqrt{n})}$ can be deduced from it. Motivated by this, we present linear \NP-hardness reductions from  \textsc{2in3-Sat} to the problems of computing $\mes(G)$ and $\mos(G)$. The reduction itself is not very complicated, but the correctness proof requires some non-trivial arguments\footnote{We would like to mention that another \NP-hardness proof for computing $\mes(G)$ has very recently appeared online~\cite{Roy19}. The proof uses a chain of reductions from \textsc{Maximum Cut} and, although it also involves a quadratic blow-up, it can be avoided by starting from \textsc{Maximum Cut} restricted to graphs of bounded degree.}.

Finally, motivated by the complexity of computing these parameters, we obtain two tight bounds on their values. We first prove that for every graph $G$ with all components of even order, $\Xo(G) \leq \tw(G)+1$, where $\tw(G)$ denotes the tree-width of $G$. This result improves the best known lower bound on a parameter defined by Hou et al.~\cite{HouYLL18} (cf. Section~\ref{sec:bounds} for the details). On the other hand, we prove that, for every $n$-vertex graph $G$ such that $V(G)$ can be partitioned into two non-empty sets that are complete to each other (i.e., a {\sl join}), $\mos(G) \geq 2 \cdot  \PartIntSup{\frac{n-2}{4}}$.
In particular, this proves the conjecture about the linear size of an odd induced subgraph for {\sl cographs}, which are the graphs of clique-width two. This adds another graph class to the previous ones for which the conjecture is known to be true~\cite{RadcliffeS95,Scott92,Berman0W97,HouYLL18}.  It is interesting to mention that our proof implies that, for a cograph $G$, $\Xo(G) \leq3$, and this bound is also tight. While for cographs, or equivalently $P_4$-free graphs, we have proved that the odd chromatic number is bounded, we also show that it is unbounded for $P_5$-free graphs.

\medskip
\noindent \textbf{Organization.} We start with some preliminaries in Section~\ref{sec:prelim}. In Section~\ref{sec:hard} we provide the linear \NP-hardness reductions and the polynomial-time algorithm for deciding whether $\Xo(G) \leq 2$. The \FPT algorithms by rank-width are presented in Section~\ref{sec:DP}, and the tight bounds in Section~\ref{sec:bounds}. We conclude the article in Section~\ref{sec:concl} with a number of open problems and research directions. Additional results for related problems discussed in the conclusions can be found in Appendix~\ref{sec:DS}.

\section{Preliminaries}
\label{sec:prelim}
\noindent \textbf{Graphs}. We use standard graph-theoretic notation, and we refer the reader to~\cite{Diestel12} for any undefined notation. Let $G=(V,E)$ be a graph, $S \subseteq V$, and $H$ be a subgraph of $G$. We denote an edge between $u$ and $v$ by $uv$. The \emph{order} of $G$ is $|V|$. The \emph{degree} (resp. \emph{open neighborhood}, \emph{closed neighborhood}) of a vertex $v \in V$ is denoted by $\deg(v)$ (resp. $N(v)$, $N[v]$), and we let $\deg_H(v)= |N(v) \cap V(H)|$. We use the notation $G - S = G[V(G) \setminus S]$. The \emph{maximum} and \emph{minimum degree} of $G$ are denoted by $\Delta(G)$ and $\delta(G)$, respectively. We denote by $P_i$ the path on $i$ vertices. For two graphs $G_1$ and $G_2$, with $V(G_2) \subseteq V(G_1)$, the \emph{union} of $G_1$ and $G_2$ is the graph $(V(G_1), E(G_1) \cup E(G_2))$. The operation of \emph{contracting} an edge $uv$ consists in deleting both $u$ and $v$ and adding a new vertex $w$ with neighborhood $N(u) \cup N(v) \setminus \{u,v\}$. A graph $M$ is a \emph{minor} of $G$ if it can be obtained from a subgraph of $G$ by a sequence of edge contractions. For a positive integer $k  \geq 3$, the \emph{$k$-wheel} is the graph obtained from a
cycle $C$ on $k$ vertices  by adding a new vertex $v$ adjacent to all the vertices of $C$. A \emph{join} in a graph $G$ is a partition of $V(G)$ into two non-empty sets $V_1$ and $V_2$ such that every vertex in $V_1$ is adjacent to every vertex in $V_2$. For a positive integer $i$, we denote by $[i]$ the set containing every integer $j$ such that $1 \leq j \leq i$.

\medskip

\noindent \textbf{Parameterized complexity}. We refer the reader to~\cite{DF13,FG06,Nie06,FPT-book} for basic background on parameterized complexity, and we recall here only some basic definitions.
A \emph{parameterized problem} is a decision problem whose instances are pairs $(x,k) \in \Sigma^* \times \mathbb{N}$, where $k$ is called the \emph{parameter}.
A parameterized problem is \emph{fixed-parameter tractable} ({\sf FPT}) if there exists an algorithm $\Acal$, a computable function $f$, and a constant $c$ such that given an instance $I=(x,k)$,
$\Acal$ (called an {\sf FPT} \emph{algorithm}) correctly decides whether $I \in L$ in time bounded by $f(k) \cdot |I|^c$. A parameterized problem is \emph{slice-wise polynomial} ({\sf XP}) if there exists an algorithm $\Acal$ and two computable functions $f,g$ such that given an instance $I=(x,k)$,
$\Acal$ (called an {\sf XP} \emph{algorithm}) correctly decides whether $I \in L$ in time bounded by $f(k) \cdot |I|^{g(k)}$.


%

Within parameterized problems, the class {\sf W}[1] may be seen as the parameterized equivalent to the class \np of classical optimization problems. Without entering into details (see~\cite{DF13,FG06,Nie06,FPT-book} for the formal definitions), a parameterized problem being {\sf W}[1]-\emph{hard} can be seen as a strong evidence that this problem is {\sl not} \fpt. The canonical example of {\sf W}[1]-hard problem is \textsc{Independent Set} parameterized by the size of the solution.
To transfer ${\sf W}[1]$-hardness from one problem to another, one uses a \emph{parameterized reduction}, which given an input $I=(x,k)$ of the source problem, computes in time $f(k) \cdot |I|^c$, for some computable function $f$ and a constant $c$, an equivalent instance $I'=(x',k')$ of the target problem, such that $k'$ is bounded by a function depending only on $k$. An equivalent definition of $\W$[1]-hard problem is any problem that admits a parameterized reduction from \textsc{Independent Set} parameterized by the size of the solution.

The \emph{Exponential Time Hypothesis} (\ETH)  of Impagliazzo et al.~\cite{ImpagliazzoPZ01,ImpagliazzoP01} implies that the 3-\textsc{Sat} problem on $n$ variables cannot be solved in time $2^{o(n)}$. We say that a  polynomial reduction from a
 problem $\Pi_1$ to a problem $\Pi_2$, generating an input of size $n_2$ from an input of size $n_1$, is \emph{linear} if $n_2 = \Ocal(n_1)$. Clearly, if $\Pi_1$ cannot be solved, under the \ETH,  in time $2^{o(n)}$ on inputs of size $n$, and there exists a linear reduction from $\Pi_1$ to $\Pi_2$, then $\Pi_2$ cannot either.

\medskip

\noindent \textbf{Width parameters}. In this article we mention several width parameters of graphs, such as tree-width, rank-width, clique-width, or boolean-width. However, since we only deal with rank-width in our algorithms (cf. Section~\ref{sec:DP}), we give only the definition of this parameter here.

A \emph{decomposition tree} of a graph $G$ is a pair $(T,\delta)$ where $T$ is a full binary tree (i.e.,\ $T$ is rooted and every non-leaf node has two children) and $\delta$ is a bijection between the leaf set of $T$ and the vertex set of $G$.
For a node $w$ of $T$, we denote by $V_w$  the subset of $V(G)$ in bijection --via $\delta$-- with the leaves of the subtree of $T$ rooted at $w$. We say that the decomposition defines the \emph{cut} $\cut{V_w}$.
The \emph{rank-width} of a decomposition tree $(T,\delta)$ of a graph $G$, denoted by $\rw(T,\delta)$, is the maximum over all $w \in V(T)$ of the rank of the adjacency matrix of the bipartite graph $G[V_w,\overline{V_w}]$.
The \emph{rank-width of $G$}, denoted by $\rw(G)$, is  the minimum $\rw(T,\delta)$ over all decomposition trees $(T,\delta)$ of $G$.

\medskip

\noindent \textbf{Definition of the problems}. A graph is called \emph{odd} (resp. \emph{even}) if every vertex has odd (resp. even) degree. The \textsc{Maximum Odd Subgraph} (resp. \textsc{Maximum Even Subgraph}) problem consists in, given a graph $G$, determining the maximum order of an odd (resp. even) induced subgraph of $G$, that is, $\mos(G)$ (resp. $\mes(G)$). An \emph{odd $q$-coloring} of a graph $G=(V,E)$ is a set of $q$ odd induced subgraphs $H_1, \ldots, H_q$ of $G$ such that $V(H_1) , \cdots , V(H_q)$ is a partition of $V$. The \textsc{Odd $q$-Coloring} problem consists in determining whether an input graph $G$ admits an odd $q$-coloring. In the \textsc{Odd Chromatic Number} problem, the objective is to determine the smallest integer $q$ such that an input graph $G$ admits an odd $q$-coloring.
%
%
%
%

\section{Linear reductions and a polynomial-time algorithm}
\label{sec:hard}
We first present the linear reductions for \textsc{Maximum Even Subgraph} and \textsc{Maximum Odd Subgraph}, and then for \textsc{Odd $q$-Coloring} for $q \geq 3$.

\begin{theorem}\label{thm:hard-mos}
The \textsc{Maximum Even Subgraph} and \textsc{Maximum Odd Subgraph} problems are \NP-hard. Moreover, none of them can be solved in time  $2^{o(n)}$ on $n$-vertex graphs unless the \ETH fails.
\end{theorem}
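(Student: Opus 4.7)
The plan is to establish both hardness statements through a polynomial-time reduction from \textsc{2in3-Sat}, which is \NP-hard by Schaefer's theorem and, via the standard linear reduction from 3-\textsc{Sat} together with the sparsification lemma, cannot be solved in $2^{o(n+m)}$ time under the \ETH (with $n$ variables and $m$ clauses). Starting from a formula $\phi$, I would build a graph $G_\phi$ on $O(n+m)$ vertices, together with a target bound $K$, such that $\phi$ is 2in3-satisfiable if and only if $\mos(G_\phi)\ge K$. Since the construction is linear in size, a $2^{o(N)}$ algorithm on $N$-vertex instances of \textsc{Maximum Odd Subgraph} would contradict the \ETH, yielding both the \NP-hardness and the conditional lower bound.

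The graph $G_\phi$ would be assembled from constant-size gadgets. For each variable $x_i$, a variable gadget admits two canonical odd selections, corresponding to $x_i$ being true or false, each exposing a distinguished pair of output vertices toward the clause layer. For each clause $C_j=\ell_1\vee \ell_2\vee \ell_3$, a clause gadget is attached to the output vertices of the three literal gadgets so that its internal odd-degree constraints are jointly satisfiable exactly when two of the three corresponding literals take their true form. The forward direction of the reduction is then a direct check: a 2in3-satisfying assignment dictates a canonical selection gadget by gadget, and every included vertex has odd induced degree by construction. The analogous result for \textsc{Maximum Even Subgraph} follows either by dualising the parity constraints in each gadget, or by applying a generic parity-flipping device---for instance, attaching one fresh pendant to each vertex of $G_\phi$---which transports odd induced subgraphs of the original graph into even induced subgraphs of the modified graph at the price of an explicit additive shift in the threshold.

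The hard part, as the authors flag, is the converse direction: showing that any odd induced subgraph of $G_\phi$ of size at least $K$ intersects every gadget in one of its canonical patterns, so that it decodes to a 2in3-satisfying assignment. My approach is to enumerate, for each gadget, all possible ways it can be intersected by a candidate solution; to upper-bound the number of vertices it can contribute in each non-canonical pattern; and to fix $K$ precisely as the all-canonical count, so that any non-canonical gadget incurs a deficit that the rest of the graph cannot recover. The delicate ingredient is the parity interface between variable and clause gadgets, since the parity at each output vertex links local choices across the whole graph; this forces the case analysis to track not only a gadget's own contribution but also the parity signals it sends and receives, and is exactly where the non-trivial arguments reside.
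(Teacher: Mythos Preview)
Your high-level plan matches the paper's: a linear gadget reduction from \textsc{2in3-Sat}, with a budget argument for the converse. Two points, however, separate the proposal from a proof.

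First, the proposal is an outline, not an argument: no gadgets are actually specified, and the converse direction---which you rightly flag as the crux---is not carried out. The paper's gadgets are concrete (a path on $p\geq 88$ vertices plus two literal vertices for each variable, a 14-vertex structure for each clause), and the converse is established through a sequence of \emph{exchange} arguments (Claims~1--4): if a maximum even subgraph $H$ fails to be canonical in some variable gadget, one explicitly rebuilds $H$ into a strictly larger even subgraph, contradicting maximality. The large path length is chosen precisely so that forcing a variable gadget into canonical form gains at least $p/2-1$ vertices, enough to absorb the loss of deleting up to three entire clause gadgets (42 vertices). Your ``enumerate all intersection patterns and bound the deficit'' strategy is plausible in spirit, but without the gadgets on the table there is nothing to check, and the interface-parity accounting you allude to is exactly where the work lies.

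Second, and this is a concrete error, your pendant device for transferring hardness between the two parities does not work. If $H$ is an odd induced subgraph of $G$ and you attach a fresh pendant to every vertex, then including the pendants of $V(H)$ makes each vertex of $H$ even, but the pendants themselves have degree one---so the result is not an even subgraph. Excluding the pendants leaves $H$ odd. There is no local parity flip of this kind. The paper handles the transfer in the opposite order: it proves \textsc{Maximum Even Subgraph} first, and then reduces to \textsc{Maximum Odd Subgraph} via a \emph{global} gadget (a $(k{+}1)$-wheel with one vertex made adjacent to all of $G$), a trick taken from Cai and Yang. If you insist on proving the odd case first, you need a different transfer mechanism, or independent gadget constructions for each parity.
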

\begin{proof}
The first statement has been already proved by Cai and Yang~\cite{CaiY11}, so we focus on the second one. We first deal with \textsc{Maximum Even Subgraph}, and we will then show how to deduce the hardness of \textsc{Maximum Odd Subgraph} with a simple modification.

In the \textsc{1in3-Sat} (resp. \textsc{2in3-Sat}) problem, we are given a \textsc{3-Sat} formula, and the objective is to decide whether there exists an assignment of the variables such that every clause contains {\sl exactly} one (resp. two) true literal(s). Porschen et al.~\cite[Lemma 5]{PorschenSSW14} showed that \textsc{1in3-Sat} is \NP-hard even if each clause contains exactly three variables and each variable occurs in exactly three clauses. Since their reduction from \textsc{3-Sat} is linear,  it follows that this restricted version of \textsc{1in3-Sat} cannot be solved in time $2^{o(n)}$ under the \ETH, where $n$ is the number of variables. By taking such an instance of \textsc{1in3-Sat} and building an equivalent instance of \textsc{2in3-Sat} by negating all the literals in every clause, it follows that \textsc{2in3-Sat} cannot be solved in time $2^{o(n)}$ under the \ETH, even if each clause contains exactly three variables and each variable occurs in exactly three clauses. We denote this version of \textsc{2in3-Sat} by $\textsc{2in3-Sat}_3$.

We proceed to present a linear reduction from $\textsc{2in3-Sat}_3$ to \textsc{Maximum Even Subgraph}. Given an instance $\varphi$ of $\textsc{2in3-Sat}_3$ with $n$ variables and $m$ clauses, we build an instance $G$ of \textsc{Maximum Even Subgraph} as follows (see Fig.~\ref{fig:reduction-mes} for an illustration). Let the variables and clauses of $\varphi$ be $x_1, \ldots, x_n$ and $c_1, \ldots, c_m$, respectively. Note that by the definition of the $\textsc{2in3-Sat}_3$ problem, we have that $n=m$.  Let $p \geq 88$ be a fixed even integer. For every variable $x_i$ of $\varphi$, we add to $G$ a variable gadget, with vertex set $X_i$, consisting of a path $P^i$ on $p$  vertices with endpoints $s_i$ and $t_i$, two vertices $x_i$ and $\bar{x}_i$ (corresponding to variable $x_i$ and its negation, respectively), and the five edges $s_ix_i$, $s_i\bar{x}_i$, $t_ix_i$, $t_i\bar{x}_i$, and $x_i\bar{x}_i$.

\begin{figure}[h!]
  \begin{center}
  \vspace{-.25cm}
    \includegraphics[width=.71\textwidth]{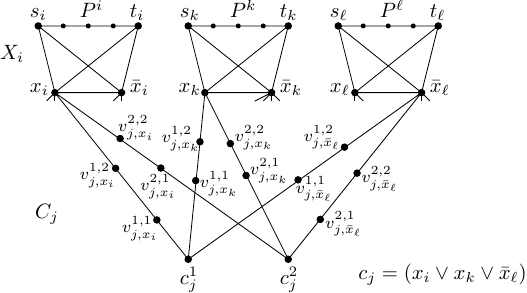}
  \end{center}\vspace{-.25cm}
  \caption{Graph $G$ built in the proof of Theorem~\ref{thm:hard-mos}, for a clause $c_j = (x_i \vee x_k \vee \bar{x}_{\ell})$.}
  \label{fig:reduction-mes}
  \vspace{-.25cm}
\end{figure}

For every clause $c_j = (\ell_1 \vee \ell_2 \vee \ell_3) $ of $\varphi$, where $\ell_1,\ell_2,\ell_3$ are the literals of $c_j$, we add to $G$ two vertices $c_j^1$ and $c_j^2$. For $i \in [3]$ and $k \in [2]$, we add a path on four vertices joining vertices $c_j^{k}$ and $\ell_i$, where the two internal vertices are new ones. For every such a path, we denote its internal vertices by $v_{j,\ell_i}^{k,1}$ and $v_{j,\ell_i}^{k,2}$, where $v_{j,\ell_i}^{k,1}$ is the one adjacent to $c_j^k$;  see Fig.~\ref{fig:reduction-mes}.
We denote by $C_j$ the set of 14 vertices of $G$ consisting of $c_j^1$, $c_j^2$, and the 12 internal vertices of the six paths joining them to the literals. This concludes the construction of $G$. Note that $|V(G)| = (p+16)n$, hence it is indeed a linear reduction. We claim that $\varphi$ is a positive instance of $\textsc{2in3-Sat}_3$ if and only if $\mes(G) \geq (p+13)n$.

Assume first that  that $\varphi$ is a positive instance of $\textsc{2in3-Sat}_3$, and let $\psi$ be the corresponding assignment of the variables. We proceed to define an even induced subgraph $H$ of $G$, with $|V(H)| = (p+13)n$, as follows. For every variable gadget $X_i$ of $G$, we include in $H$ the whole path $P^i$ and either $x_i$ or $\bar{x}_i$ depending on whether $\psi$ sets variable $x_i$ to true or false, respectively. For every clause $c_j = (\ell_1 \vee \ell_2 \vee \ell_3) $ of $\varphi$, suppose without loss of generality that $\psi$ sets $\ell_1$  and $\ell_2$ to true, and $\ell_3$ to false. We include in $H$ vertices $c_j^1, c_j^2$ and, out of the 12 internal vertices of the paths, we add to $H$ all of them except for the two vertices adjacent to $c_j^1$ and $c_j^2$ in the paths joining them to $\ell_3$, that is, the following set of 10 vertices: $\{v_{j,\ell_i}^{k,r}: i,k,r \in [2]\} \cup \{v_{j,\ell_3}^{1,2},v_{j,\ell_3}^{2,2}\}$.
See Fig.~\ref{fig:reduction-mes2} for an illustration with $\ell_3= x_k$. It can be verified that $H$ is indeed an even subgraph and that $|V(H)| = (p+1)n + 12n = (p+13)n$.

\begin{figure}[h!]
  \begin{center}
  \vspace{-.25cm}
    \includegraphics[width=.71\textwidth]{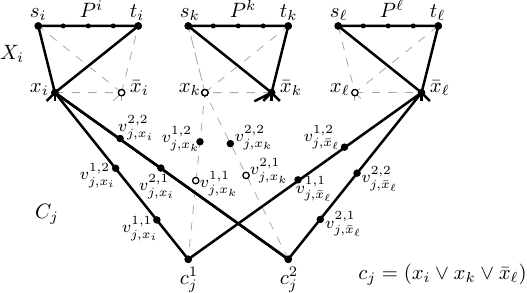}
  \end{center}\vspace{-.25cm}
  \caption{Example of an even subgraph $H$ in the graph of Fig.~\ref{fig:reduction-mes}, assuming that $\psi(x_i) = 1$ and $\psi(x_k)= \psi(x_{\ell}) =0$. Fat edges belong to $H$, while circled vertices do not.}
  \label{fig:reduction-mes2}
  \vspace{-.25cm}
\end{figure}

Conversely, suppose now that $G$ contains an even subgraph $H$ with $|V(H)| = \mes(G) \geq (p+13)n$. We state some properties of $H$ through a sequence of claims.

\begin{claimb}\label{claim:0}
For every $i \in [n]$, if $x_i \in V(H)$ and $\bar{x}_i \in V(H)$, then $|V(H) \cap X_i| \leq p/2 +2$, and if
$x_i \notin V(H)$ and $\bar{x}_i \notin V(H)$, then $|V(H) \cap X_i| \leq p/2$,
\end{claimb}
\begin{proofclaim}
Suppose first that $x_i \in V(H)$ and $\bar{x}_i \in V(H)$.
$H$ cannot contain any pair of adjacent vertices of $P^i$, as this would result in a vertex of degree one or three in $H$, which implies that $|V(H) \cap V(P^i)|\leq p/2$, and thus $|V(H) \cap X_i| \leq p/2 +2$. Suppose now that $x_i \notin V(H)$ and $\bar{x}_i \notin V(H)$. If $V(P^i) \subseteq V(H)$, then $\deg_H(s_i)=1$, so similarly as before we get that $|V(H) \cap X_i| \leq p/2$.
 \end{proofclaim}

\begin{claimb}\label{claim:1}
For every $i \in [n]$, either $x_i \in V(H)$ or $\bar{x}_i \in V(H)$.
\end{claimb}
\begin{proofclaim}
Assuming that the claim is not true, we will build from $H$  another even induced subgraph $H'$ of $G$ with $|V(H')| > |V(H)|$, contradicting the fact that $|V(H)| = \mes(G)$. For $i \in [n]$, let $J^G_i \subseteq [n]$ (resp. $J^H_i \subseteq [n]$) be the set of indices $j$ such that there exists at least one edge in $G$ (resp. in $H$) between $X_i$ and $C_j$. We define $H'$ according to the following iterative procedure, starting with $H'=H$:
\begin{enumerate}
\item For every  $i \in [n]$ such that exactly one of $x_i$ and $\bar{x}_i$ belongs to $H$, say $\ell_i$,  
    we ``double'' the paths from $X_i$ to $C_j$ for every  $j \in J^H_i$ (note that some of these paths may already be ``doubled'' in $H'$). Formally, for every $j \in J^H_i$, we replace $H'$ with $H' \cup G[\{\ell_i, v_{j,\ell_i}^{1,1},v_{j,\ell_i}^{1,2},v_{j,\ell_i}^{2,1},v_{j,\ell_i}^{2,2}, c_j^1,c_j^2\}]$.

\item For every  $i \in [n]$ such that $|V(H) \cap \{x_i , \bar{x}_i\}| \in \{0,2\}$, we first delete from $H'$, if any, all the vertices in $C_j$ for every index $j \in J^G_i$, that is, we replace $H'$ with $H' - \bigcup_{j \in J^G_i}C_j$ . Finally, we replace $H'$ with $(H' - \{\bar{x}_i\}) \cup G[\{x_i\} \cup V(P^i)]$, that is, we delete $\bar{x}_i$ and we add $x_i$ (if it did not already belong to $H'$) and the whole path $P^i$.

\item Note that after Step~2 above, the degree in $H'$ of all vertices in $X_i$ is even for every $i \in [n]$. Note also that, since in $H$ all the degrees were even and in Step~1 only paths that already existed in $H$ were doubled,  no vertex in a set $C_j$ can have degree one in $H'$. However, it is possible that a vertex in a set $C_j$, namely both $c_j^1$ and $c_j^2$,  has degree three in $H'$ (recall that every clause contains exactly three literals). Let $j \in [n]$ be such an index, and let $i_j \in [n]$ be an arbitrarily chosen index such that $H'$ contains edges from literal $\ell_{i_j}$ to $C_j$.  We guarantee that all the degrees in $H'$ are even by removing from $H'$ the internal vertices of the paths from $\ell_{i_j}$ to $c_j^1$ and $c_j^2$, that is, we replace $H'$ with $H' - \{v_{j,\ell_{i_j}}^{1,1},v_{j,\ell_{i_j}}^{1,2},v_{j,\ell_{i_j}}^{2,1},v_{j,\ell_{i_j}}^{2,2}\}$.
\end{enumerate}
Let $H'$ be the subgraph of $G$ obtained at the end of the above procedure.  By the discussion above, $H'$ is indeed an even induced subgraph. It remains to prove that $|V(H')| > |V(H)|$. We analyze each of the three steps separately.

Step~1 only adds new vertices to $H'$, so we can focus on Steps~2 and~3.

Let $i \in [n]$ be an index considered in Step~2, and let $H_1$ and $H_2$ be the current graphs before and after applying the procedure for index $i$, respectively. Since $|V(H) \cap \{x_i , \bar{x}_i\}| \in \{0,2\}$, Claim~\ref{claim:0} implies that $|V(H_1) \cap X_i| = |V(H) \cap X_i| \leq p/2 +2$. On the other hand, from the definition of Step~2 it follows that $|V(H_2) \cap X_i| = p+1$. Since every variable appears in exactly three clauses of $\varphi$, $|J_i^G| = 3$, so we have that $\sum_{j \in J_i^G}|C_j| = 42$, so in the first part of Step~2 at most 42 vertices are removed from $H_1$ in order to obtain $H_2$. Therefore,
\begin{eqnarray*}
|V(H_2)| - |V(H_1)| & = &  (|V(H_2) \cap X_i| - |V(H_1) \cap X_i|)  - \sum_{j \in J_i^G}|V(H_1) \cap C_j|\\
 & \geq & (p+1) - (p/2 +2) - 42 = p/2 - 43 > 0,
\end{eqnarray*}
where we have used that $p \geq 88$.

Let $i \in [n]$ be an index considered in Step~3,  and let again $H_1$ and $H_2$ be the current graphs before and after applying the procedure for index $i$, respectively. Let $j \in [n]$ be such that  $\deg_{H_1}(c_j^1) = \deg_{H_1}(c_j^2) = 3$ and let $c_j = (\ell_1 \vee \ell_2 \vee \ell_3)$; see Fig.~\ref{fig:Step3} for an illustration of the analysis. Since necessarily $\deg_{H}(c_j^1) = \deg_{H}(c_j^2) = 2$, but each of $\ell_1,\ell_2,\ell_3$ has a neighbor in $C_j$ in the graph $H$ (as otherwise $\deg_{H_1}(c_j^1) < 3$), it follows that $ \{\ell_1,\ell_2,\ell_3,c_j^1,c_j^2\} \subseteq V(H)$ and that, out of the six possible paths from $ \{\ell_1,\ell_2,\ell_3\}$ to $\{c_j^1,c_j^2\}$, exactly four of them are entirely in $H$, while two of them are entirely outside of $H$. That is, $|V(H) \cap C_j | = 10$. On the other hand, the definition of Step~3 implies that $|V(H_2) \cap C_j | = |V(H_1) \cap C_j | - 4 = 10$.
Hence, since Step~3 only deletes vertices in $C_j$, and we have proved that $|V(H) \cap C_j | = |V(H_2) \cap C_j | = 10$, we conclude that the joint application of Steps~1 and~3 does not decrease $|V(H)|$.

\begin{figure}[h!]
  \begin{center}
  \vspace{-.15cm}
    \includegraphics[width=.97\textwidth]{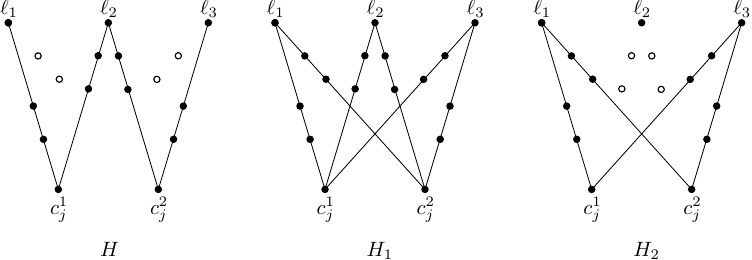}
  \end{center}\vspace{-.25cm}
  \caption{Subgraphs considered in the analysis of Step~3 in the construction of $H'$ in the proof of Claim~\ref{claim:1}. The circled vertices do {\sl not} belong to the corresponding graph.}
  \label{fig:Step3}
  \vspace{-.25cm}
\end{figure}


Note that, since we assume that the claim is not true, that is,  that there exists some index $i \in [n]$ such that $|V(H) \cap \{x_i , \bar{x}_i\}| \in \{0,2\}$, Step~2 in the construction of $H'$ has been applied at least once, therefore $|V(H')| > |V(H)|$ and the claim follows.
\end{proofclaim}

\begin{claimb}\label{claim:2}
For every $i \in [n]$, $V(P^i) \subseteq V(H)$.
\end{claimb}
\begin{proofclaim}
Consider an arbitrary  $i \in [n]$. By Claim~\ref{claim:1}, either $x_i \in V(H)$ or $\bar{x}_i \in V(H)$. Assume without loss of generality that $x_i \in V(H)$. If $V(P^i) \nsubseteq V(H)$, then the graph $H'$ defined as $H \cup G[\{x_i\} \cup V(P^i)]$ is an even induced subgraph of $G$ with $|V(H')| > |V(H)|$, a contradiction to the hypothesis that $|V(H)| = \mes(G)$.
\end{proofclaim}

\begin{claimb}\label{claim:3}
For every $j \in [n]$, $|V(H) \cap C_j| = 12$.
\end{claimb}
\begin{proofclaim}
If $H$ contained at least 13 vertices in $C_j$, then at least one of $c_j^1$ and $c_j^2$ would have degree three in $H$ (see Fig.~\ref{fig:reduction-mes}), a contradiction. On the other hand, Claims~\ref{claim:1} and~\ref{claim:2} imply that $|V(H) \cap \bigcup_{i \in [n]}X_i| = (p+1)n$. Since by hypothesis $|V(H)| \geq (p+13)n$, it follows that,  for every $j \in [n]$, $H$ contains exactly 12 vertices in $C_j$, and the claim follows. \end{proofclaim}


By Claim~\ref{claim:1}, the following assignment $\psi$ of the variables is well-defined: for $i \in [n]$, $\psi$ sets variable $x_i$ to true if and only if vertex $x_i$ belongs to $H$.
The following claim concludes the proof of the theorem for the even case.

\begin{claimb}\label{claim:4}
For every clause $c_j = (\ell_1 \vee \ell_2 \vee \ell_3) $ of $\varphi$, exactly two of its literals are set to true by $\psi$.
\end{claimb}
\begin{proofclaim}
By Claim~\ref{claim:3}, $|V(H) \cap C_j| = 12$, and then clearly $\deg_{H}(c_j^1)=\deg_{H}(c_j^2)=2$. Moreover, since $\deg_{H}(c_j^1)=\deg_{H}(c_j^2)=2$ and $\deg_H(v) \in \{0,2\}$ for every  $v \in C_j  \cap V(H)$, necessarily $C_j \setminus V(H) = \{ v_{j,\ell_i}^{1,1}, v_{j,\ell_i}^{2,1}  \}$ for some $i \in [3]$. Assume without loss of generality that $i=3$. Then it follows that $\ell_1, \ell_2 \in V(H)$ and that $\ell_3 \notin V(H)$. Indeed, if $\ell_1 \notin V(H)$ (the proof for $\ell_2$ is symmetric), the fact that $v_{j,\ell_1}^{1,1} \in V(H)$ implies that $\deg_H(v_{j,\ell_1}^{1,2})=1$, a contradiction. Similarly, if $\ell_3 \in V(H)$, the fact that $v_{j,\ell_3}^{1,1} \notin V(H)$ implies that $\deg_H(v_{j,\ell_3}^{1,2})=1$, a contradiction as well. Therefore, by the definition of $\psi$, it follows that exactly two of the literals of $c_j$ (namely, $\ell_1$ and $\ell_2$) are set to true by $\psi$. \end{proofclaim}


Note that the graph $G$ constructed above to prove the hardness of \textsc{Maximum Even Subgraph} has bounded maximum degree, namely $\Delta(G) \leq 9$.

\medskip
To prove the statement for \textsc{Maximum Odd Subgraph}, we present a simple linear reduction from  \textsc{Maximum Even Subgraph} that uses a trick of Cai and Yang~\cite[Theorem 4.5]{CaiY11}. Namely, let $G$ be an instance of \textsc{Maximum Even Subgraph} as constructed by the above reduction, and recall that $\mes(G) \leq (p+13)n =: k$. By adding an isolated vertex if needed, we may assume that $k$ is even. We build from $G$ an instance $G'$ of  \textsc{Maximum Odd Subgraph} by adding a ($k+1$)-wheel $W$ and making an arbitrary vertex of $W$ adjacent to all the vertices of $G$. Note that this is indeed a linear reduction. It can be easily checked that $\mos(G') \geq 2k+1$ if and only if $\mes(G) \geq k$, and the theorem follows.
\end{proof}

\begin{theorem}\label{thm:hard-qCol}
For every integer $q \geq 3$, given a graph $G$ on $n$ vertices, determining whether
$\Xo(G) \leq q$ is \NP-complete and, moreover, cannot be solved in time $2^{o(n)}$ unless the \ETH fails.
\end{theorem}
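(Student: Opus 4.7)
The plan is to reduce from $q$-\textsc{Coloring} restricted to bounded-degree instances, which is \NP-complete for every fixed $q \geq 3$ and, by standard linear reductions from 3-\textsc{Sat}, cannot be solved in time $2^{o(n)}$ on $n$-vertex instances under the \ETH. Membership in \NP is immediate, since a candidate partition of $V(G)$ into $q$ classes can be checked for the odd-degree property in polynomial time. Starting from such an instance $H$ of $q$-\textsc{Coloring} with $n$ vertices and $\Ocal(n)$ edges, I would build in polynomial time a graph $G$ with $|V(G)| = \Ocal(n)$ such that $\Xo(G) \leq q$ if and only if $H$ is $q$-colorable.

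The construction attaches to each vertex of $H$ a constant-size \emph{vertex gadget} whose odd $q$-coloring encodes a choice of color in $[q]$, and to each edge of $H$ a constant-size \emph{edge gadget} connecting the two corresponding vertex gadgets. The vertex gadget has a distinguished part (for concreteness, an edge $v_1 v_2$ together with auxiliary vertices that can be toggled to restore odd parity at $v_1$ and $v_2$ depending on how the edge gadgets attach); in any odd $q$-coloring of $G$, the common color class of $v_1$ and $v_2$ is well-defined and represents the color of $v$ in $H$. The edge gadget is engineered so that it admits an extension to an odd $q$-coloring of $G$ if and only if the two encoded colors for its endpoints are distinct. Given a proper $q$-coloring $c$ of $H$, the forward direction extends $c$ by assigning color $c(v)$ to each vertex gadget and filling in the edge gadgets using their internal flexibility; conversely, any odd $q$-coloring of $G$ yields, via the encoded colors of the vertex gadgets, a proper $q$-coloring of $H$, since the edge gadgets rule out the ``same color'' configuration at the endpoints of every edge of $H$.

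Since each vertex and each edge of $H$ contributes only $\Ocal(1)$ vertices to $G$, and $H$ has $\Ocal(n)$ edges, we obtain $|V(G)| = \Ocal(n)$; hence a $2^{o(n)}$ algorithm for Odd $q$-\textsc{Coloring} would solve $q$-\textsc{Coloring} in time $2^{o(n)}$, contradicting the \ETH. The main obstacle will be the design of the edge gadget: unlike proper coloring, the odd-degree condition is a global parity constraint within each color class, so the gadget must distinguish ``same color'' from ``different color'' at its two attachment points while still admitting a valid odd coloring of its internal vertices in every allowed configuration, and without introducing unwanted dependencies between the colors of non-adjacent vertex gadgets. I expect the right building block to be a short subdivided structure joining $u$ and $v$ together with a few parity-flipping helpers, in the spirit of the gadgets used in the proof of Theorem~\ref{thm:hard-mos}.
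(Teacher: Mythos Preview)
Your plan has the right starting point (reduce linearly from $q$-\textsc{Coloring} on bounded-degree graphs), but it is not a proof: you explicitly leave the edge gadget unspecified and call its design ``the main obstacle''. Without that gadget the argument does not go through, and the framework you set up---a vertex gadget with a distinguished edge $v_1v_2$ plus ``parity-flipping helpers'', and a separate edge gadget---may well be workable, but you give no evidence that the required configurations exist or that unwanted interactions between gadgets can be avoided.

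The paper's reduction is far simpler than what you envisage, and it is worth seeing why. There are no vertex gadgets at all: each vertex of $H$ is kept as a single vertex of $G$, and each edge of $H$ is subdivided once. That is the whole construction (after a constant-size adjustment so that $|V(H)|+|E(H)|$ is even). The backward direction is then a one-liner: if $uw\in E(H)$ and $u,w$ receive the same color in an odd $q$-coloring of $G$, the subdivision vertex between them has degree $0$ or $2$ in its color class, a contradiction; hence the restriction to $V(H)$ is a proper $q$-coloring.

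The part you did not anticipate is how to make the \emph{forward} direction work: given a proper $q$-coloring of $H$, one must color the subdivision vertices so that every original vertex ends up with odd degree in its color class. The paper's key idea is to use an \emph{odd orientation} of $H$---an orientation in which every vertex has odd in-degree, which exists precisely when $|V(H)|+|E(H)|$ is even. Each subdivision vertex is then given the color of the endpoint the arc points to; an original vertex $u$ thus sees in its own color class exactly one subdivision vertex per incoming arc, so its degree there equals its in-degree, which is odd. Your ``parity-flipping helpers'' are trying to solve the same problem, but the orientation trick dispenses with them entirely and yields a reduction whose size is $|V(H)|+|E(H)|$.
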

\begin{proof}
Membership in \NP is clear. For every integer $q \geq 3$, we present a linear reduction from the \textsc{$q$-Coloring} problem, which is well-known to be \NP-hard and not solvable in time $2^{o(n)}$ on $n$-vertex graphs unless the \ETH fails~\cite{ImpagliazzoPZ01,ImpagliazzoP01}. We will use the fact that any connected graph $G=(V,E)$ such that $|V| + |E|$ is even admits an orientation of $E$ such that, in the resulting digraph,  all the vertex in-degrees are odd; we call such an orientation an \emph{odd orientation}. Moreover, an odd orientation can be found in polynomial time (for a proof, see for instance~\cite{FrankJS99}).


Given an instance $G=(V,E)$ of \textsc{$q$-Coloring}, such that $G$ is connected, we build from $G$ an instance $G^{\sbullet}$ of \textsc{Odd $q$-Coloring} as follows. First, if $|V| + |E|$ is odd, we arbitrarily select a vertex $v \in V$ and add a triangle on three new vertices $v_1,v_2,v_3$ and the edge $vv_1$. Note that the resulting graph $G'=(V',E')$ is $q$-colorable for $q \geq 3$ if and only if $G$ is, and that $|V'| + |E'|$ is even. Hence, $E'$ admits an odd orientation $\phi$; see Fig.~\ref{fig:reduction3col}(a)-(b). We let $G^{\sbullet}$ be the graph obtained from $G'$ by subdividing every edge once; see Fig.~\ref{fig:reduction3col}(c). Note that the size of $G^{\sbullet}$ depends linearly on the size of $G$, as required.
We claim that $\chi(G) \leq q$ if and only if $\Xo(G^{\sbullet})\leq q$.

\begin{figure}[h!]
  \begin{center}
  \vspace{-.25cm}
    \includegraphics[width=.88\textwidth]{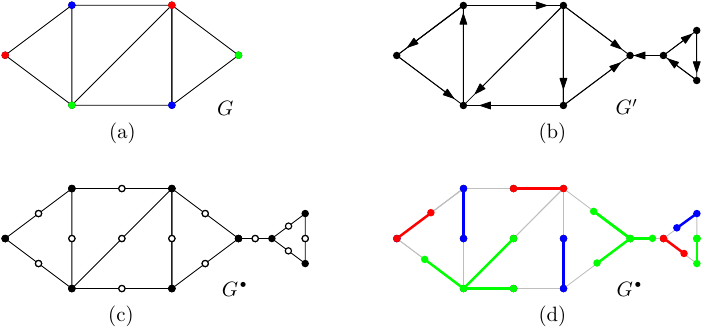}
  \end{center}\vspace{-.25cm}
  \caption{Example of the reduction of Theorem~\ref{thm:hard-qCol} with $q=3$: (a) Instance $G$ of \textsc{$3$-Coloring}, along with a proper 3-coloring. (b) Graph $G'$ obtained from $G$ along with an orientation of $E(G')$ with all in-degrees odd. (c) Instance $G^{\text{{\fontsize{4}{12}\selectfont$\bullet$}}}$ of \textsc{Odd $3$-Coloring}. (d) An odd $3$-coloring of $G^{\text{{\fontsize{4}{12}\selectfont$\bullet$}}}$.}
  \label{fig:reduction3col}
  \vspace{-.25cm}
\end{figure}

Assume first that we are given a proper $q$-coloring $c: V \to [q]$, which can trivially be extended to a proper $q$-coloring of $G'$. We define an odd $q$-coloring $c_{\sf odd}$ of $G^{\sbullet}$ as follows. If $v \in V(G^{\sbullet})$ is an original vertex of $V'$, we set $c_{\sf odd}(v) = c(v)$. Otherwise, if $v$ is a subdivision vertex between two vertices $u$ and $w$ of $V'$, we set $c_{\sf odd}(v) = c(u)$ if edge $uw$ is oriented toward $u$ in $\phi$, and $c_{\sf odd}(v) = c(w)$ otherwise; see Fig.~\ref{fig:reduction3col}(d). It can be easily verified that $c_{\sf odd}$ is indeed an odd $q$-coloring  of $G^{\sbullet}$.

Conversely, let $c_{\sf odd}: V(G^{\sbullet}) \to [q]$ be an odd $q$-coloring of $G^{\sbullet}$, let $uw$ be an edge of $G'$, and let $v$ be the subdivision vertex in $G^{\sbullet}$ between $u$ and $w$. If follows that $c_{\sf odd}(u) \neq c_{\sf odd}(w)$, as otherwise vertex $v$ would have degree zero or two in its color class. Therefore, letting $c(v) = c_{\sf odd}(v)$ for every vertex $v \in V(G)$ defines a proper $q$-coloring of $G$, and the theorem follows.\end{proof}


Theorem~\ref{thm:hard-qCol} establishes the \NP-hardness of \textsc{Odd $q$-Coloring} for every $q \geq 3$. On the other hand, the \textsc{Odd $1$-Coloring} is trivial, as for any graph $G$, $\Xo(G) \leq 1$ if and only if $G$ is an odd graph itself. Therefore, the only remaining case is \textsc{Odd $2$-Coloring}. In the next theorem we prove that this problem can be solved in polynomial time.

\begin{theorem}\label{thm:2Coloring}
The \textsc{Odd $2$-Coloring} problem  can be solved in polynomial time.
\end{theorem}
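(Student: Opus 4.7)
The plan is to reduce \textsc{Odd $2$-Coloring} to checking feasibility of a system of linear equations over $\mathrm{GF}(2)$, which can be done in polynomial time by Gaussian elimination. For each vertex $v \in V(G)$, I would introduce a boolean variable $x_v \in \{0,1\}$ indicating to which of the two color classes $v$ is assigned. The key observation is that, over $\mathrm{GF}(2)$, the equality indicator is linear: $\mathbf{1}[x_u = x_v] \equiv 1 + x_u + x_v \pmod{2}$.

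Using this identity, the parity of the degree of $v$ in its own color class equals
\[
\sum_{u \in N(v)} \mathbf{1}[x_u = x_v] \;\equiv\; \deg(v) + \sum_{u \in N(v)} x_u + \deg(v)\cdot x_v \pmod{2}.
\]
Setting this expression equal to $1$ for every vertex $v$ yields one linear equation per vertex in the variables $\{x_u\}_{u \in V(G)}$. Concretely, if $\deg(v)$ is even the equation reduces to $\sum_{u \in N(v)} x_u \equiv 1 \pmod{2}$, while if $\deg(v)$ is odd it becomes $\sum_{u \in N[v]} x_u \equiv 0 \pmod{2}$. By construction, any $\{0,1\}$-valued solution to this system corresponds to a partition $V = V_1 \uplus V_2$ with $x_v$ indicating the class of $v$, and both $G[V_1]$ and $G[V_2]$ are odd; conversely any odd $2$-coloring yields such a solution. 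Hence the system is feasible if and only if $G$ admits an odd $2$-coloring.

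The main subtlety, which I would address at the end, concerns the non-emptiness of the two classes: the all-zero assignment is feasible exactly when $G$ itself is an odd graph, in which case $\Xo(G) \leq 1 \leq 2$ trivially, so this case is harmless. If the system is feasible and some solution assigns variables of both values, we have a genuine $2$-coloring; otherwise we check directly whether $G$ is odd. The whole procedure runs in polynomial time since the system has $n$ equations in $n$ unknowns and Gaussian elimination over $\mathrm{GF}(2)$ is polynomial. I expect the verification that the linearization correctly captures the odd-degree constraint in both directions to be the most delicate point, but it follows immediately from the identity for $\mathbf{1}[x_u = x_v]$ above, so no real obstacle is anticipated.
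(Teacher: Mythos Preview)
Your proposal is correct and follows essentially the same approach as the paper: reduce to feasibility of a linear system over $\mathrm{GF}(2)$ using the fact that the same-color indicator $1 + x_u + x_v$ is linear. The only cosmetic difference is that the paper introduces an auxiliary edge variable $x_{i,j} \equiv 1 + x_i + x_j$ for each edge before writing the per-vertex parity constraint $\sum_{j: v_j \in N(v_i)} x_{i,j} \equiv 1$, whereas you substitute this expression directly into the constraint; after that substitution the two systems coincide, so your version is simply the paper's argument with the edge variables eliminated.
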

\begin{proof}
We will express the \textsc{Odd $2$-Coloring} problem as the existence of a feasible solution to a system of linear equations over the binary field, which can be determined in polynomial time using, for instance, Gaussian elimination. Given an instance $G=(V,E)$ of \textsc{Odd $2$-Coloring}, let its vertices be labeled $v_1, \ldots, v_n$. For every vertex $v_i \in V$ we create a binary variable $x_i$, and for every edge $v_iv_j \in E$, we create a binary variable $x_{i,j}$. The interpretation of these two types of variables is quite different. Namely, for a vertex variable $x_i$, its value corresponds to the color (either 0 or 1) assigned to vertex $v_i$. On the other hand, the value of an edge variable corresponds to whether this edge belongs to a monochromatic subgraph, that is, to whether both its endvertices get the same color. In this case, its value is 1, and 0 otherwise. We guarantee this latter property by adding the following set of linear equations:
\begin{equation}\label{eq:1}
x_i + x_j + x_{i,j} \equiv 1 \ \ \ \ \text{ for every edge $v_iv_j \in E$.}
\end{equation}

To guarantee that the degree of every vertex in each of the two monochromatic subgraphs is odd, we add  the following set of linear equations (for an edge variable $x_{i,j}$, to simplify the notation we interpret $x_{j,i} = x_{i,j}$):

\begin{equation}\label{eq:2}
\sum_{j: v_j \in N(v_i)}x_{i,j} \equiv 1 \ \ \ \ \text{ for every vertex $v_i \in V$.}
\end{equation}

Note that by Equation~(\ref{eq:1}), only monochromatic edges contribute to the sum of Equation~(\ref{eq:2}). Therefore, the above discussion implies that $\Xo(G) \leq 2$ if and only if the system of linear  equations given by Equations~(\ref{eq:1}) and~(\ref{eq:2}) admits a feasible solution, and the theorem follows.\end{proof}

\medskip

Note that the \textsc{Even $2$-Coloring} problem could be formulated in a similar way, just by replacing Equation~(\ref{eq:2}) with $\sum_{j: v_j \in N(v_i)}x_{i,j} \equiv 0$. However, this is not that interesting, since all the instances of \textsc{Even $2$-Coloring} are positive~\cite{Lov79}.

\section{Dynamic programming algorithms}
\label{sec:DP}
In this section, we present \FPT algorithms for \textsc{Maximum Odd/Even Subgraph} and \textsc{Odd} $q$-\textsc{Coloring}, parameterized by the rank-width of the input graph. The algorithms are similar to those of Bui-Xuan et al.~\cite{Bui-XuanTV10, Bui-XuanTV11} for \textsc{Maximum Independent Set} parameterized by rank-width and boolean-width, respectively, and even closer  to the one by Bui-Xuan et al.~\cite{Bui-XuanTV13} for so-called locally checkable vertex partitioning problems, in particular for {\sc Dominating Set}. There are however two key differences with our algorithms. First, while partial solutions for \textsc{Maximum Independent Set} are, themselves, independent sets, this is not true in general for odd subgraphs, where partial solutions may consist in a subgraph some vertices of which have even degree. Those vertices will impose some extra constraints on the remainder of the solution. The second difference is that, while the equivalence classes of~\cite{Bui-XuanTV10} and~\cite{Bui-XuanTV11} are based on neighborhoods of vertex sets, those for \textsc{Maximum Odd Subgraph} only require ``neighborhoods modulo 2''. This will allow us to consider only $2^{\Ocal(\rw)}$ equivalence classes, compared to $2^{\Ocal(\rw^2)}$ classes used in~\cite{Bui-XuanTV10} for \textsc{Maximum Independent Set}.

Throughout this section, we will rely on the notion of ``neighborhood modulo 2" of a set of vertices, defined as follows.
Given a graph $G$ and $X\subseteq V(G)$, the \emph{neighborhood of $X$ modulo 2}, denoted by $N_2(X)$, is the set $\triangle_{u\in X}(N(u))$, where the operator $\triangle$ denotes the symmetric difference.
Note that $N_2(X)$ is exactly the set of vertices in $V(G)\setminus X$ that have an odd number of neighbors in $X$. The results in this section are stated using the $\Ocal^*$ notation, which hides polynomial factors in the input size.

\begin{theorem}
\label{thm:odd-subgraph-rw}
Given a graph $G$ along with a decomposition tree of rank-width $\rw$, the \textsc{Maximum Odd Subgraph} problem can be solved in time $\Ocal^*(2^{3\rw})$.
\end{theorem}

\begin{proof}
We give a dynamic programming over the given decomposition tree $(T,L)$. Recall that there is a bijection between the leaves of $T$ and $V(G)$, and that each edge of $T$ corresponds to a cut $(A,\overline{A})$ of $G$. We begin by defining the equivalence relation over subsets of $A$, given a cut $(A,\overline{A})$: two sets $X, Y\subseteq V(G)$ are \textit{odd neighborhood equivalent} with regard to $A$, denoted by $X \equiv_2^A Y$, if $N_2(X)\setminus A = N_2(Y)\setminus A$. Then, given a row basis $\mathcal{B}$ of the adjacency matrix of $(A,\overline{A})$ over {\sf GF}[2],  where we interpret a vertex set as the vector corresponding to its vertices, we define the \emph{representative} of a set $X\subseteq A$ as the the unique set of vertices $R_A(X)\subseteq A$ such that $R_A(X)\subseteq \mathcal{B}$ and $X \equiv_2^A R_A(X)$. Observe that since $(A,\overline{A})$ is a cut of $(T,L)$, its adjacency matrix has rank at most $\rw(G)$, and therefore $|R_A(X)|\leq \rw(G)$. This implies, in particular, that there are at most $2^{\rw(G)}$ distinct representatives for subsets of a given set $A$.

We are now ready to define the tables of our algorithm. Given an edge $e$ of $(T,L)$ and its associated cut $(A,\overline{A})$ of $G$, we store in table $T_A$, for every pair $R,R'$ of representatives of subsets of $A$ and $\overline{A}$, respectively, a largest set $S\subseteq A$ such that $S$ is odd neighborhood equivalent to $R$, and all the vertices that have even degree in $G[S]$ is exactly the set $N_2(R') \cap S$. More formally:
\[(\maltese)\ T_A[R,R'] = \underset{S\subseteq A}{\text{maxset}}\{S \equiv_2^A R \wedge \{v\in S: |N(v)\cap S| \text{ is even}\} =  N_2(R') \cap S\}, \]
where the notation `maxset' indicates a largest set that satisfies the conditions.
In cases where edge $e$ is incident with a leaf, the cut associated with $e$ is of the form $(\{u\},V(G)\setminus \{u\})$. We assume $u$ is not an isolated vertex, as such vertices never belong to an odd subgraph.  We set $T_{\{u\}}[\emptyset,\emptyset] = T_{\{u\}}[\emptyset,\{v\}] = \emptyset$, and $T_{\{u\}}[\{u\},\{v\}] = \{u\}$, where $v$ is the unique vertex of a basis of the adjacency matrix of the cut $(V(G)\setminus \{u\}, \{u\})$, which is the only non-empty choice for $R'$.
The entry $T_{\{u\}}[\{u\},\emptyset]$ is left empty, since $\{u\}$ is the only set equivalent to $\{u\}$ and $u$ has even degree in the subgraph $G[\{u\}]$.

Given an edge $e$ of $(T,L)$ such that the tables of both edges incident with one endvertex of $e$, say $f,f'$, have been computed, we compute the table of $e$ as follows. Let us denote by $(A,\overline{A}), (X,\overline{X})$, and $(Y,\overline{Y})$ the cuts associated with $e,f$, and $f'$, respectively. For each pair of representatives $R_A, R_{\overline{A}}$ of the cut $(A,\overline{A})$, the value of $T_A[R_A, R_{\overline{A}}]$ is the largest $T_X[R_X, R_{\overline{X}}] \cup T_Y[R_Y, R_{\overline{Y}}]$, such that $R_X, R_{\overline{X}}, R_Y$, and $R_{\overline{Y}}$ satisfy the following conditions with regard to $R_A$ and $R_{\overline{A}}$:
\begin{itemize}
\item[(i)] $R_A \equiv_2^A R_X \triangle R_Y$,
\item[(ii)] $R_{\overline{X}} \equiv_2^{\overline{X}} R_{\overline{A}} \triangle R_Y$, and
\item[(ii')] $R_{\overline{Y}} \equiv_2^{\overline{Y}} R_{\overline{A}} \triangle R_X$.
\end{itemize}

We proceed with this computation, starting from the leaves, in a bottom-up manner, having previously rooted $T$ by choosing an arbitrary edge, subdividing it, and making the newly created vertex the root of $T$. Observe that in the final stage of the algorithm, when the tables of both edges $f,f'$ incident with the root have been computed, we compute the table for the root node as described above, with $\overline{A} = \emptyset$, since in this case $X \cup Y = V(G)$. Of the three conditions described above, condition~(i) becomes trivial, since $R_A=\emptyset$, and conditions (ii) and (ii') simplify to $R_{\overline{X}} \equiv_2^{\overline{X}} R_Y$, and $R_{\overline{Y}} \equiv_2^{\overline{Y}}R_X$, respectively. This immediately implies that, provided that the tables are updated correctly, the table for the root node will indeed contain an optimal solution.

We first observe that since, as noted above, there are at most $2^\rw$ representatives on each side of each cut, the choices of $R_X, R_Y$ and $R_{\overline{A}}$ uniquely determines $R_{\overline{X}},R_{\overline{Y}}$ and $R_A$ through equations (i), (ii), and (ii'), and computing new tables can be carried out in time $\Ocal^*(2^{3\rw})$. Since there are $O(|V(G)|)$ nodes in the decomposition, the running-time is $\Ocal^*(2^{3\rw})$ as well, as desired.

It now remains to prove that the algorithm correctly computes an optimal solution. The correctness of the tables for the leaves of $T$ follows from their description. We now prove by induction that the tables are correct for internal edges of $T$ as well. Let us assume $T_X$ and $T_Y$ have been fully and correctly computed for all possible representatives $R_X, R_{\overline{X}}, R_Y$, and $R_{\overline{Y}}$ as per the description above.

Finally, we argue that if $T_X$ and $T_Y$ are computed correctly, then so is $T_A$. Observe first that, for every $(R_X, R_{\overline{X}})$ and $(R_Y, R_{\overline{Y}})$ satisfying conditions (i),(ii) and (ii'), $S = T_X[R_X, R_{\overline{X}}] \cup T_Y[R_Y, R_{\overline{Y}}]$ satisfies $S \equiv_2^A R_A$ due to condition (I), and $\{v\in S: |N(v)\cap S| \text{ is even}\} =  N_2(R_{\overline{A}}) \cap S\}$ due to conditions (ii) and (ii'). In other words, $S$ satisfies $(\maltese)$, bar possibly the maximality condition.
To complete the proof that the tables are computed correctly, we argue that given any two representatives $R_A$ and $R_{\overline{A}}$ of $A$ and $\overline{A}$, respectively, there exist representatives $R_X, R_{\overline{X}}, R_Y$, and $R_{\overline{Y}}$ of $X, \overline{X}, Y$, and $\overline{Y}$, respectively,  that satisfy conditions (i), (ii), and (ii'), and such that $T_X[R_X, R_{\overline{X}}] \cup T_Y[R_Y, R_{\overline{Y}}]$ is a largest set that satisfies $(\maltese)$ with respect to $(R_A, R_{\overline{A}})$. Let $R_X, R_{\overline{X}}, R_Y$, and $R_{\overline{Y}}$ be representatives such that $T_A[R_A, R_{\overline{A}}] = T_X[R_X, R_{\overline{X}}] \cup T_Y[R_Y, R_{\overline{Y}}] = S$, and let $S_X$ and $S_Y$ denote $S \cap X$ and $S\cap Y$, respectively. Note that, since $X$ and $Y$ form a partition of $A$, $S_X$ and $S_Y$ form a partition of $S$, which implies $S = S_X \cup S_Y = S_X \triangle S_Y$.
We first show that $S$ indeed satisfies $(\maltese)$ with respect to $(A,\overline{A})$, i.e., $S \equiv_2^A R_A$ and $\{v\in S: |N(v)\cap S| \text{ is even}\} = N_2(R_{\overline{A}}) \cap S$.
For the first of those two conditions, combining it with the fact that $S=S_X\cup S_Y = S_X \triangle S_Y$, we only need to prove that $S_X \triangle S_Y \equiv_2^A R_X \triangle R_Y$.
Observe first that, since $X$ and $Y$ form a partition of $A$, we have that for every vertex $v\in \overline{A}, |N(v) \cap A| = |N(v) \cap X| +  |N(v) \cap Y|$. Therefore, for every sets $X',X''\subseteq X$ and $Y',Y''\subseteq Y$, it holds that if $X' \equiv_2^X X''$ and $Y' \equiv_2^Y Y''$, then $X' \triangle Y' \equiv_2^A X'' \triangle Y''$. From the definition of representative we obtain that $S \equiv_2^A S_X \triangle S_Y \equiv_2^A R_X \triangle R_Y$, as desired.

Let us now consider the second condition, i.e., $\{v\in S: |N(v)\cap S| \text{ is even}\} =  N_2(R_{\overline{A}}) \cap S$. Let $v$ be a vertex in $S_X$. If $|N(v)\cap S|$ is even, then at least one of the following cases holds:
\begin{itemize}
\item[$\bullet$] $|N(v)\cap S_X|$ is even and $v\not\in N_2(S_Y)$. Since $|N(v)\cap S_X|$ is even, we obtain from $(\maltese)$ in $T_X$ that $v\in N_2(R_{\overline{X}})$, which when combined with (ii) implies $v\in N_2(R_{\overline{A}}) \triangle N_2(S_Y)$. Since $v\not\in N_2(S_Y)$, it follows that $v\in N_2(R_{\overline{A}})$, as desired.
\item[$\bullet$] $|N(v)\cap S_X|$ is odd and $v\in N_2(S_Y)$. Symmetrically to the case above, we have that $v\not\in N_2(R_{\overline{X}})$, hence $v\not\in N_2(R_{\overline{A}}) \triangle N_2(S_Y)$ from  (ii), and since $v\in N_2(S_Y)$, it follows that $v\in N_2(R_{\overline{A}})$, as desired.
\end{itemize}
The case where $v\in S_Y$ is proved similarly, replacing condition (ii) with (ii'). Therefore, $\{v\in S: |N(v)\cap S| \text{ is even}\} \subseteq S \cap N_2(R_{\overline{A}})$.
Now, let  $v$ be a vertex in $S_X \cap N_2(R_{\overline{A}})$. From (ii), we obtain that $v\in N_2(S \cap \overline{X})$ if and only if $v\not\in N_2(S_Y)$. Since $T_X$ satisfies $(\maltese)$, it holds that $v\in N_2(S \cap \overline{X})$ if and only if $|N(v) \cap S_X|$ is even, and therefore $v\not\in N_2(S_Y)$ if and only if $|N(v)\cap S_X|$ is even. Therefore, $|N(v)\cap S| = |N(v)\cap S_X| + |N(v)\cap S_Y|$ is even, as desired. As above, the case where $v\in S_Y$ is proved similarly, replacing condition (ii) with (ii'). Therefore, $\{v\in S: |N(v)\cap S| \text{ is even}\} = S \cap N_2(R_{\overline{A}})$.

Finally, we prove the maximality of $S$ among all those sets that satisfy $(\maltese)$ with respect to $(R_A,R_{\overline{A}})$. Let us assume for a contradiction that there exists $S^*$ that satisfies $(\maltese)$ with respect to $(R_A,R_{\overline{A}})$ and such that $|S^*| > |S|$.
Let $S^*_X$ and $S^*_Y$ denote $S^* \cap X$ and $S^*\cap Y$, respectively. Observe that $S^*_X$ and $S^*_Y$ satisfy $(\maltese)$ with respect to some pairs of representatives $(R_X, R_{\overline{X}})$ and $(R_Y, R_{\overline{Y}})$, respectively. In addition, observe that since $S$ satisfies $(\maltese)$ with respect to $(R_A, R_{\overline{A}})$, it follows that $S, S_X$, and $S_Y$ satisfy conditions (i), (ii), and (ii') with respect to $(R_A,R_{\overline{A}})$, contradicting the assumption that $T_X$ and $T_Y$ were computed correctly.
\end{proof}

Small variations of the algorithm of Theorem~\ref{thm:odd-subgraph-rw} allow us to prove the following two theorems.

\begin{theorem}
\label{thm:even-subgraph-rw}
Given a graph $G$ along with a decomposition tree of rank-width $\rw$, the \textsc{Maximum Even Subgraph} problem can be solved in time $\Ocal^*(2^{3\rw})$.
\end{theorem}

\begin{proof}
The algorithm and its proof are nearly identical to the ones for \textsc{Maximum Odd Subgraph}, replacing condition $(\maltese)$ with
\[T_A[R,R'] = \underset{S\subseteq A}{\max}\{ |S| : S \equiv_2^A R \wedge \{v\in S: |N(v)\cap S| \text{ is odd}\} = S\cap N_2(R')\},\]
and the leaves are instead defined as  $T_{\{u\}}[\emptyset,\emptyset] = T_{\{u\}}[\emptyset,\{v\}] = \emptyset$ and $T_{\{u\}}[\{u\},\emptyset]=\{u\}$. $T_{\{u\}}[\{u\},\{v\}]$ is left empty, due to there being no odd subgraph of $G[\{u\}]$ with the same neighborhood as $\{u\}$ in $G - \{u\}$.
\end{proof}

\begin{theorem}
\label{thm:odd-chromatic-rw}
Given a graph $G$ along with a decomposition tree of rank-width $w$, the \textsc{Odd} $q$-\textsc{Coloring} problem can be solved in time $\Ocal^*(2^{\Ocal(q\cdot \rw)})$.
\end{theorem}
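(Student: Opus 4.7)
The plan is to generalize the dynamic programming of Theorem~\ref{thm:odd-subgraph-rw} by tracking, at each cut $(A, \overline{A})$ of the decomposition tree, information about a candidate partition of $A$ into $q$ color classes $S_1, \ldots, S_q$, each of which will ultimately induce an odd subgraph of $G$. Since different color classes interact only through the requirement that the $S_i$'s form a partition of $V(G)$, the idea is to maintain one pair of odd-neighborhood representatives per color class, essentially running $q$ independent copies of the representative bookkeeping of Theorem~\ref{thm:odd-subgraph-rw} in parallel.

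More concretely, at each cut $(A,\overline{A})$ I would store a boolean table $T_A$ indexed by $q$-tuples of pairs of representatives $(R_A^i, R_{\overline{A}}^i)_{i \in [q]}$, where $T_A[(R_A^i, R_{\overline{A}}^i)_{i \in [q]}] = \textsf{true}$ iff there exists a partition $S_1 \uplus \cdots \uplus S_q = A$ such that, for every color $i \in [q]$, $S_i \equiv_2^A R_A^i$ and $\{v \in S_i : |N(v) \cap S_i| \text{ is even}\} = N_2(R_{\overline{A}}^i) \cap S_i$ (so that the parity of every vertex of $S_i$ will be corrected by its neighbors in $S_i \cap \overline{A}$). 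The leaves provide $q$ valid base configurations, one per choice of color for the unique vertex. For an internal edge with cut $(A,\overline{A})$ whose children correspond to cuts $(X, \overline{X})$ and $(Y, \overline{Y})$ with $A = X \uplus Y$, I would mark a parent entry $(R_A^i, R_{\overline{A}}^i)_{i \in [q]}$ as $\textsf{true}$ whenever there exist children entries $(R_X^i, R_{\overline{X}}^i)_{i \in [q]}$ and $(R_Y^i, R_{\overline{Y}}^i)_{i \in [q]}$, both marked $\textsf{true}$, such that the compatibility conditions (i), (ii), (ii') from the proof of Theorem~\ref{thm:odd-subgraph-rw} hold \emph{independently for each color $i$}. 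At the root, where $\overline{A} = \emptyset$ and thus the constraint forces every $S_i$ to induce an odd subgraph of $G$, the algorithm answers $\textsf{yes}$ iff some entry with $R_{\overline{A}}^i = \emptyset$ for all $i \in [q]$ is marked $\textsf{true}$.

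For the complexity, since each side of each cut admits at most $2^{\rw}$ representatives, each table has at most $(2^{2\rw})^q = 2^{2q\rw}$ entries, and computing a new table from its two children takes time polynomial in the input size times $(2^{2q\rw})^2 = 2^{4q\rw}$, for an overall bound of $\Ocal^*(2^{\Ocal(q \cdot \rw)})$. The main obstacle is correctness, but I expect it to factor cleanly: the per-color constraints are exactly those analyzed in the proof of Theorem~\ref{thm:odd-subgraph-rw}, so the inductive argument there can be reproduced $q$ times in parallel, once per color class. The only genuinely new ingredient is the partition constraint $S_1 \uplus \cdots \uplus S_q = A$, and this is maintained automatically because the leaf configurations each assign the unique vertex to exactly one color, and the combining step takes disjoint unions along the partition $A = X \uplus Y$.
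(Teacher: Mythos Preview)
Your proposal is correct and follows essentially the same approach as the paper: index the tables by $q$ pairs of representatives, one per color class, enforce the partition constraint $\bigcup_{i=1}^{q} S_i = A$, and reuse the per-color analysis of Theorem~\ref{thm:odd-subgraph-rw} verbatim. Your write-up is in fact more detailed than the paper's own (very terse) argument, and your running-time accounting and handling of the root entry are both accurate.
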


\begin{proof}
The algorithm is nearly identical to the one for \textsc{Maximum Odd Subgraph}, with the exception that the tables are indexed by $q$ pairs of representatives $R_1,R_1',\ldots,R_q,R_q'$, associating set $S_i$ with each pair $(R_i,R_i')$, with the additional constraint that $\bigcup_{i=1}^qS_i=A$. Since each table has at most $2^{2\cdot q\cdot \rw}$ entries, and computing a new table from two given ones takes time polynomial in the number of entries, we obtain the desired running time.
\end{proof} 

\section{Tight bounds}
\label{sec:bounds}
In this section we provide two tight bounds concerning odd induced subgraphs and odd colorings. Namely, we first provide in Theorem~\ref{thm:Xodd-tw} a tight upper bound on the odd chromatic number in terms of tree-width, and then we provide in Theorem~\ref{thm:tight-cographs} a tight lower bound on the size of a maximum odd induced subgraph for graphs that admit a join. 

\begin{theorem}\label{thm:Xodd-tw}
For every graph $G$ with all components of even order we have that $\Xo(G) \leq \tw(G)+1$, and this bound is tight.
\end{theorem}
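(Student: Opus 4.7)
I would prove the upper bound by induction on $|V(G)|$, handling each connected component of $G$ separately, so I may assume $G$ is connected of even order. The base cases ($|V(G)|\le 2$, or $G$ itself an odd graph, in which case $\Xo(G)=1\le\tw(G)+1$) are immediate.

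For the inductive step, take a tree decomposition $(T,\{X_t\}_{t\in V(T)})$ of $G$ of width $w=\tw(G)$ and pick a leaf bag $X_\ell$ with parent $X_p$; set $U:=X_\ell\setminus X_p$ (nonempty after pruning redundant leaves) and $Z:=X_\ell\cap X_p$. Then $|Z|\le w$, $|X_\ell|\le w+1$, and $N_G(U)\subseteq U\cup Z$, so the vertices of $U$ are ``enclosed'' by $X_\ell$ and separated from $V(G)\setminus X_\ell$ by $Z$. The heart of the argument is to exhibit a set $W$ with $U\subseteq W\subseteq X_\ell$ satisfying
\begin{itemize}
\item[(a)] every connected component of $G-W$ has even order,
\item[(b)] $G[W]$ is an odd induced subgraph, and
\item[(c)] for every $z\in Z\setminus W$, the number $|N_G(z)\cap W|$ is even.
\end{itemize}

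Given such a $W$, I would apply the induction hypothesis to $G-W$ (which is strictly smaller, has tree-width at most $w$ via the obvious modification of $(T,\mathcal X)$, and has all components of even order by (a)) to obtain an odd $(w+1)$-coloring $V_1,\ldots,V_{w+1}$. I would then extend it to $G$ by giving all of $W$ a single color: if some color is unused in the inductive coloring, assign it to $W$; otherwise, merge $W$ into an existing class $V_c$ selected so that each $w\in W$ has the right parity of $|N(w)\cap Z\cap V_c|$ and each $z\in Z\cap V_c$ still has odd degree after adding $W$. Property (b) ensures the vertices of $W$ have odd degree inside their color class, while (c), combined with a careful choice of $V_c$, preserves the odd degree of every boundary vertex in $Z$.

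For tightness, the path $P_4$ already achieves the bound for $w=1$ (we showed $\Xo(P_4)=2$ informally in the text), and for larger $w$ tight examples can be built via the join operation or by taking suitable $w$-trees of even order.

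The main obstacle is proving that a set $W$ satisfying (a)--(c) always exists. Since $|X_\ell|\le w+1$, there are at most $2^{w+1}$ candidate subsets $W$ of $X_\ell$ containing $U$, and conditions (b) and (c) translate into $\mathbb{F}_2$-linear parity constraints on the membership indicators of the vertices of $X_\ell\setminus U$; the subtle point is reconciling these with (a), which requires controlling the parities of component orders of $G-W$. I would handle this via a case analysis on the structure of $G[X_\ell]$ and how it attaches to $G-X_\ell$ through $Z$, leveraging the connectedness and even order of $G$ together with the freedom provided by the $w+1$ membership choices. The flexibility between using a spare color and merging into some $V_c$ (made possible by a pigeonhole argument across the $w+1$ color classes) should ultimately cover all configurations.
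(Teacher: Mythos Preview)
Your proposal has a genuine gap in the extension step, and the existence of $W$ is only sketched rather than proved. The more serious issue is the merging. Suppose all $w+1$ colors are used on $G-W$ and you merge $W$ into some class $V_c$. You need every vertex of $V_c$ to keep odd degree after adding $W$. Condition~(c) handles vertices $z\in Z\setminus W$, but says nothing about vertices $v\in V_c\setminus X_\ell$: such a vertex can be adjacent to $W\cap Z$ (the separator property only gives $N(v)\cap X_\ell\subseteq Z$, not $N(v)\cap X_\ell=\emptyset$), and there is no reason $|N(v)\cap W\cap Z|$ should be even. Since you have no control over which vertices of $G-W$ land in $V_c$, you cannot guarantee this. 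Taking $W=U$ would avoid the problem (then $W\cap Z=\emptyset$), but then (b) forces $G[U]$ to be odd, which fails already when $|U|=1$. So the scheme as written does not close. The pigeonhole argument you allude to would have to simultaneously control the parities $|N(w)\cap V_c|$ for all $w\in W$ \emph{and} $|N(v)\cap W|$ for all $v\in V_c$, with only $w+1$ choices of $c$; this is far from automatic and you have not indicated how it would go.

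The paper's proof avoids all of this by a short global argument: Scott showed that every graph with all components of even order admits a vertex partition into induced trees, each of which is odd. Contracting each such tree yields a minor $G'$ of $G$, hence $\tw(G')\le\tw(G)$, and any proper $(\tw(G')+1)$-coloring of $G'$ lifts back to an odd coloring of $G$ by giving each tree the color of its contracted vertex. Tightness is witnessed by the subdivided clique $K_n^{\sbullet}$ (with $n$ even): no two original clique vertices can share a color, so $\Xo(K_n^{\sbullet})=n=\tw(K_n^{\sbullet})+1$. Your $P_4$ example is fine for $w=1$, but ``suitable $w$-trees'' is not a proof for general $w$; the subdivided clique gives a clean uniform family.
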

\begin{proof}
Scott proved~\cite[Corollary 3]{Scott01} that every graph $G$ with all components of even order admits a vertex partition such that every vertex
class induces a tree with all degrees odd. Consider such a vertex partition, and let $G'$ be the graph obtained from $G$ by contracting each of the trees to a single vertex. Since $G'$ is a minor of $G$, we have that $\tw(G') \leq \tw (G)$. Now note that every proper vertex coloring of $G'$ using $q$ colors can be lifted to a partition of $V(G)$ into $q$ odd induced subgraphs (in fact, odd induced forests). Indeed, with every color $i$ of a proper $q$-coloring of $V(G')$ we associate an induced forest of $G$ defined by the union of the trees whose corresponding vertex in $G'$ is colored $i$; see Fig~\ref{fig:tw} for an example. Therefore,
$$
\Xo(G) \leq \chi(G') \leq \tw(G') + 1 \leq \tw (G) + 1,
$$
where we have used the well-known fact that the chromatic number of a graph is at most its tree-width plus one~\cite{Klo94}.

\begin{figure}[h!tb]
  \begin{center}
  \vspace{-.15cm}
    \includegraphics[width=.45\textwidth]{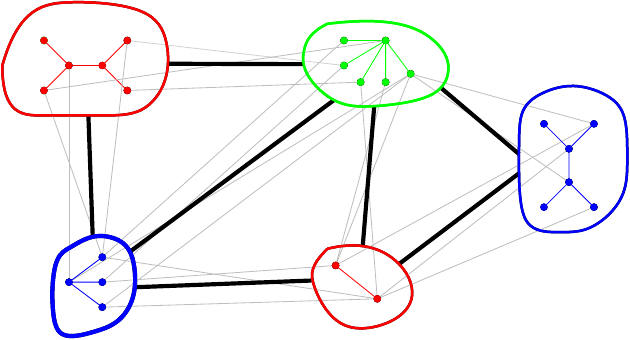}
  \end{center}\vspace{-.25cm}
  \caption{A graph $G$ and a partition of $V(G)$ into odd induced trees. Thick vertices and edges correspond to the contracted graph $G'$. A proper 3-coloring of $V(G')$ is depicted with colors.}
  \label{fig:tw}
  \vspace{-.25cm}
\end{figure}

To see that this bound it tight, consider a subdivided clique $K_n^{\sbullet}$, that is, the graph obtained from a clique on $n$ vertices, with $n \equiv0,3 \pmod{4}$, by subdividing every edge once; see Fig.~\ref{fig:example-sub} for an example. Since no pair of original vertices of the clique can get the same color, we have that $\Xo(K_n^{\sbullet}) = n = \tw(K_n^{\sbullet}) + 1$.
\end{proof}

\begin{figure}[h!tb]
  \begin{center}
  \vspace{-.15cm}
    \includegraphics[width=.27\textwidth]{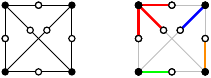}
  \end{center}\vspace{-.25cm}
  \caption{A subdivided $K_4$ and an odd $4$-coloring of it.}
  \label{fig:example-sub}
  \vspace{-.25cm}
\end{figure}


Let us mention some consequences of Theorem~\ref{thm:Xodd-tw}. Hou et al.~\cite{HouYLL18} define the following parameter. Let $\mathcal{G}_k$ be the set of all graphs of treewidth at most $k$ without isolated vertices, and let $c_k = \min_{G \in \mathcal{G}_k}\frac{\mos(G)}{|V(G)|}$. In~\cite{HouYLL18} the authors prove that $c_2 = 2/5$ and say that the best general lower bound is $c_k \geq \frac{1}{2k+2}$, which follows from a result of Scott~\cite{Scott92}. As an immediate corollary of Theorem~\ref{thm:Xodd-tw} it follows that $c_k \geq \frac{1}{k+1}$, which improves the lower bound by a factor two. As it is known~\cite{HouYLL18} that, for $k \in [4]$, $c_k \leq \frac{2}{k+3}$, our lower bound implies that $1/4 \leq c_3 \leq 1/3$ and $1/5 \leq c_4 \leq 2/7$.

\medskip

We now provide a lower bound on $\mos(G)$ for every graph that admits a join.

\begin{theorem}\label{thm:tight-cographs}
For every $n$-vertex graph $G$ that admits a join we have
$$
\mos(G) \geq 2 \cdot  \PartIntSup{\frac{n-2}{4}},
$$
and this bound is tight even for cographs.
\end{theorem}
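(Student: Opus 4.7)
\medskip\noindent\textbf{Proof plan.}
The plan is to decompose $\mos(G)$ along the join via a parity/handshake argument, reduce the theorem to a single-graph lemma, and close with a parity refinement.

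Write $G = G_1 + G_2$ with $|V_i| = a_i$. For any candidate $S = S_1 \cup S_2$ with $S_i \subseteq V_i$ of size $s_i$, one has $\deg_{G[S]}(v) = \deg_{G[S_i]}(v) + s_{3-i}$ for every $v \in S_i$. Demanding all these degrees to be odd, together with the handshake lemma, leaves only two consistent parity configurations: either both $s_i$ are even with each $G[S_i]$ an odd induced subgraph, or both $s_i$ are odd with each $G[S_i]$ an even induced subgraph. Setting $m_i := \mos(G_i)$ and letting $e_i$ denote the largest odd cardinality of an even induced subgraph of $G_i$ (well-defined since any single vertex qualifies), this yields the identity
\[
\mos(G) \;=\; \max(m_1 + m_2,\; e_1 + e_2).
\]

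The technical heart is then the following lemma: for every graph $H$ on $h \geq 1$ vertices, $\mos(H) + e(H) \geq h - 1$, where $e(H)$ is defined analogously. I would prove this via Gallai's theorem, partitioning $V(H) = A \cup B$ with $G[A]$ and $G[B]$ both even-degree induced subgraphs, and case-splitting on the parities of $|A|$ and $|B|$. When one of the parts is odd, it directly witnesses $e(H) \geq |A|$ or $|B|$, and any remaining shortfall in $\mos(H)$ is recovered from edges lying inside or across the partition. The delicate sub-case is when every Gallai partition of $H$ has both parts of even size, a regime that contains the tight examples $K_{2,2}$, $C_4$, and $P_4$; this requires a more careful structural argument and is the main obstacle of the proof.

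Once the lemma is in hand, applying it to $G_1$ and $G_2$ gives $(m_1+e_1)+(m_2+e_2) \geq n-2$, whence $\mos(G) \geq \lceil (n-2)/2 \rceil$. A parity refinement finishes the argument: $m_i$ is always even (handshake forces every odd induced subgraph to have even order), while $e_i$ is odd by definition, so both arguments of the maximum are even integers, and therefore $\mos(G)$ is an even integer at least $\lceil (n-2)/2 \rceil$. The smallest such value is exactly $2\lceil (n-2)/4 \rceil$, giving the bound. For tightness I would exhibit the iterated join of $k$ copies of $K_{2,2}$, a cograph for which the identity above inductively yields $\mos = 2k = n/2$ when $n = 4k$; for other residues of $n$ modulo $4$, attaching a small cograph block (such as $\overline{K_2}$, $P_3$, or a single vertex) to the iterated construction matches the stated value exactly.
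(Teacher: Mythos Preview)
Your reduction to the identity $\mos(G)=\max(m_1+m_2,\,e_1+e_2)$ is correct, as is the parity refinement at the end, and this packaging is genuinely different from the paper's. The issue is the lemma $\mos(H)+e(H)\geq h-1$: you explicitly flag the sub-case where every even/even Gallai partition of $H$ has both parts of even size as ``the main obstacle'' and leave it open; and even your ``easy'' sub-case is only sketched, since $e(H)\geq |A|$ for one odd part does not by itself give the required lower bound on $\mos(H)$. As written, the proof has a gap precisely at the point you identify.

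The fix is short and is essentially what the paper does. Use the \emph{other} Gallai-type result (every graph admits a partition $A\uplus B$ with $G[A]$ odd and $G[B]$ even; see~\cite{Caro94}), and apply it to $H-v$ for an arbitrary vertex $v$ rather than to $H$. Then $|A|$ is even, $|B|=h-1-|A|$, and one reads off $\mos(H)\geq |A|$ and $e(H)\geq |B|$, regardless of the parity of $h$. So what you call the main obstacle dissolves once you switch Gallai theorems and delete one vertex first.

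The paper's proof is this same trick executed directly on the join instead of being abstracted into a one-graph lemma: it removes at most one vertex from each side to make $|V_1|,|V_2|$ odd, applies the odd/even Gallai partition inside each $G[V_i]$, and observes that $V_1^{\sf e}\cup V_2^{\sf e}$ and $V_1^{\sf o}\cup V_2^{\sf o}$ are both odd induced subgraphs partitioning what remains. Your route has the advantage of producing an exact formula for $\mos$ of a join and a clean reusable lemma; the paper's route additionally yields an odd $3$-coloring of $G$ (whence Corollary~\ref{cor:Xodd-cographs}), which a maximum-only argument does not directly give. For tightness the paper simply exhibits $K_{2,2,2}$ and $C_5^+$; your cocktail-party family is a correct infinite extension.
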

\begin{proof}
Let $V_1,V_2 \subseteq V(G)$ define a join of $G$. We proceed to define a coloring $c:V(G) \to [3]$ such that, for some $i  \in [3]$, $G[c^{-1}(i)]$ is an odd subgraph with the claimed order. We distinguish three cases according to the parities of $n$, $|V_1|$, and $|V_2|$.
\begin{itemize}
\item[]\hspace{-.2cm}\textbf{Case 1}: $n$ is even and both $|V_1|,|V_2|$ are odd. Gallai proved (see~\cite{Caro94}) that the vertex set of every graph $G$ can be partitioned into two sets $A,B$ such that $G[A]$ is odd and $G[B]$ is even. We apply this result to both $G[V_1]$ and $G[V_2]$, yielding four sets $V_1^{{\sf e}},V_1^{{\sf o}},V_2^{{\sf e}},V_2^{{\sf o}}$ such that, for $i \in [2]$, $V_i = V_i^{{\sf e}} \uplus V_i^{{\sf o}}$, $G[V_i^{{\sf e}}]$ is even, and $G[V_i^{{\sf o}}]$ is odd. For $i \in [2]$, the fact that $G[V_i^{{\sf o}}]$ is odd implies that $|V_i^{{\sf o}}|$ is even, which in turn implies that $|V_i^{{\sf e}}|$ is odd since, by assumption, $|V_i^{{\sf e}}| + |V_i^{{\sf o}}| = |V_i|$ is odd. We define $c^{-1}(1) = V_1^{{\sf e}} \cup V_2^{{\sf e}}$, $c^{-1}(2) = V_1^{{\sf o}} \cup V_2^{{\sf o}}$, and $c^{-1}(3) = \emptyset$. Since $V_1,V_2 \neq \emptyset$ as they define a join of $G$, and both $|V_1|$ and $|V_2|$ are odd, it follows that, for $i \in [2]$, $V_i^{{\sf e}} \neq \emptyset$. This implies that both $G[c^{-1}(1)]$ and $G[c^{-1}(2)]$ are odd, and therefore one of them has order at least $n/2$.

\medskip
\item[]\hspace{-.2cm}\textbf{Case 2}: $n$ is even and both $|V_1|,|V_2|$ are even. Let $v_1 \in V_1$ and $v_2 \in V_2$ be two arbitrary vertices. We define $c(v_1) = c(v_2) =3$ and we apply Case~1 above to the graph $G - \{v_1,v_2\}$ with join given by $V_1 \setminus \{v_1\}$ and $V_2 \setminus \{v_2\}$, obtaining two odd induced subgraphs of $G$ colored 1 and 2, one of which has order at least $(n-2)/2$. Note that $G[c^{-1}(3)]$ is also an odd subgraph (an edge).

\medskip
\item[]\hspace{-.2cm}\textbf{Case 3}: $n$ is odd. Assume without loss of generality that $|V_1|$ is even and $|V_2|$ is odd, and let $v_1 \in V_1$ be an arbitrary vertex. We apply again Case~1 to the graph $G - \{v_1\}$ with join given by $V_1 \setminus \{v_1\}$ and $V_2$, obtaining two odd induced subgraphs of $G$, one of which has order at least $(n-1)/2$. Note that since $n$ is odd, in this case $\Xo(G)$ is not defined.
\end{itemize}

Summarizing, we have proved that if $n$ is even, then $\mos(G) \geq \frac{n-2}{2}$, and that if $n$ is odd, then $\mos(G) \geq \frac{n-1}{2}$. Taking into account that an odd subgraph must have even order, both cases imply that, for every $n$, $\mos(G) \geq 2 \cdot  \PartIntSup{\frac{n-2}{4}}$.

\medskip
Let us now see that this bound is tight for both even and odd values of $n$. For even $n$, consider the tripartite graph $K_{2,2,2}$. It can be checked that this graph contains none of $K_4$, $K_{1,3}$, and $2K_2$ (which are the only odd subgraphs on four vertices) as an induced subgraph, and therefore $\mos(K_{2,2,2}) = 2 = 2 \cdot  \PartIntSup{\frac{6-2}{4}}$. Finally, for odd $n$, consider $C_5^+$, that is, the graph obtained from $C_5$ by adding (any) two vertex-disjoint chords. Again, it can be checked that this graph contains none of $K_4$, $K_{1,3}$, and $2K_2$ as an induced subgraph, and therefore $\mos(C_5^+) = 2 = 2 \cdot  \PartIntSup{\frac{5-2}{4}}$.
Note that both $K_{2,2,2}$ and $C_5^+$ are cographs; see Fig.~\ref{fig:cographs}.
\end{proof}

\begin{figure}[h!tb]
  \begin{center}
  \vspace{-.15cm}
    \includegraphics[width=.47\textwidth]{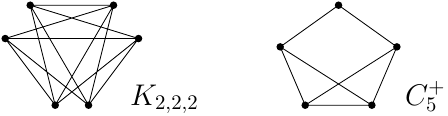}
  \end{center}\vspace{-.25cm}
  \caption{Cographs showing that the bound $2 \cdot  \PartIntSup{\frac{n-2}{4}}$ is tight.}
  \label{fig:cographs}
  \vspace{-.25cm}
\end{figure}

Determining a tight lower bound for cographs that are not necessarily connected remains open. The proof of Cases~1 and~2 of Theorem~\ref{thm:Xodd-tw} together with the fact that $\Xo(K_{2,2,2})=3$ (since $\mos(K_{2,2,2}) = 2$) yield the following corollary.

\begin{corollary}\label{cor:Xodd-cographs}
Let $G$ be a cograph with every connected component of even order. Then $\Xo(G) \leq 3$. Moreover, this bound is tight.
\end{corollary}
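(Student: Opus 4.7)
The plan is to construct an odd 3-coloring of $G$ component by component and then merge the pieces by reusing the same three labels. Let $C$ be a connected component of $G$; by hypothesis $|V(C)|$ is even. If $|V(C)| = 2$ then $C \cong K_2$ is itself odd and can take a single color. Otherwise $C$ is a connected cograph on at least four vertices, so I would invoke the standard cotree characterization: every connected cograph on at least two vertices admits a join, because the root of its cotree must be a series (join) node. Hence $C$ falls under the hypothesis of Theorem~\ref{thm:tight-cographs}.

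The key step is to use not merely the statement but the explicit construction inside Cases~1 and~2 of the proof of Theorem~\ref{thm:tight-cographs}, which produces a map $c_C : V(C) \to [3]$ whose nonempty color classes each induce an odd subgraph of $C$. Because $|V(C)|$ is even, the parities of $|V_1|$ and $|V_2|$ in the chosen join must either both be odd (Case~1, using only colors $1$ and $2$) or both be even (Case~2, in which color~$3$ receives the edge $\{v_1,v_2\}$ of the join and is therefore also odd). Combining all the per-component colorings under the shared palette $\{1,2,3\}$ yields $\Xo(G) \leq 3$.

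For tightness I would point to $K_{2,2,2}$, which is a cograph (the complement of $3K_2$) consisting of one connected component of even order~$6$. The last paragraph of the proof of Theorem~\ref{thm:tight-cographs} gives $\mos(K_{2,2,2}) = 2$, so no two odd induced subgraphs can partition its six vertices, forcing $\Xo(K_{2,2,2}) \geq 3$; together with the upper bound this gives equality. I do not anticipate a serious obstacle; the only subtlety worth checking is that Case~1 of Theorem~\ref{thm:tight-cographs} remains applicable when $|V_1|$ or $|V_2|$ equals~$1$, which it does since $|V_i^{{\sf e}}|$ is forced to be odd and hence at least one whenever $|V_i|$ is odd.
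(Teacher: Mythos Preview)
Your argument is correct and follows essentially the same route as the paper: invoke the explicit construction of Cases~1 and~2 in the proof of Theorem~\ref{thm:tight-cographs} on each connected component, and use $K_{2,2,2}$ with $\mos(K_{2,2,2})=2$ for tightness. You make explicit the one step the paper leaves implicit, namely that every connected cograph on at least two vertices is a join, so Theorem~\ref{thm:tight-cographs} applies componentwise; otherwise the two proofs coincide.
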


Note that cographs can be equivalently defined as $P_4$-free graphs. It is interesting to note that, in contrast to Corollary~\ref{cor:Xodd-cographs}, $P_5$-free graphs have unbounded odd chromatic number. Indeed, let $H_n$ be the graph obtained from the subdivided clique $K_n^{\sbullet}$, with $n \equiv0,3 \pmod{4}$, by adding an edge between each pair of original vertices of the clique. It can be checked that $\Xo(H_n) \geq n$ and, in fact, the proof of Theorem~\ref{thm:hard-qCol} implies that $\Xo(H_n) = n$. Note that $H_n$ is a split graph, hence split graphs have unbounded odd chromatic number. 

\section{Further research}
\label{sec:concl}
We considered computational aspects of the \textsc{Maximum Odd Subgraph} and \textsc{Odd $q$-Coloring} problems. A number of interesting questions remain open.

We gave in Theorem~\ref{thm:odd-chromatic-rw} an algorithm that solves  \textsc{Odd $q$-Coloring} in time $\Ocal^*(2^{\Ocal(q\cdot \rw)})$. Is the \textsc{Odd Chromatic Number} problem \FPT or $\W[1]$-hard parameterized by rank-width? A strongly related question is how the odd chromatic number depends on rank-width. We proved in Theorem~\ref{thm:Xodd-tw} that $\Xo(G) \leq \tw(G)+1$, but we do not know whether $\Xo(G) \leq f(\rw (G))$ for some function $f$. Note that this would not only yield an \FPT algorithm for \textsc{Odd Chromatic Number} by rank-width, but would also prove the conjecture about the linear size of a largest odd induced subgraph~\cite{Caro94} for all graphs of bounded rank-width. As a first step in this direction, we proved in Corollary~\ref{cor:Xodd-cographs} that cographs, which have rank-width at most one, have odd chromatic number at most three. It would be interesting to prove an upper bound for distance-hereditary graphs, which can be equivalently defined as graphs of rank-width one.

In fact, we do not even know whether \textsc{Odd Chromatic Number} by rank-width is in \XP. In view of
the algorithm of Theorem~\ref{thm:odd-chromatic-rw}, a sufficient condition for this would be that there exists a function $f$ such that
$\Xo(G) \leq f(\rw(G)) \cdot \log |V(G)|$ for every graph $G$ with all components of even order, but we were unable to prove it. Another plausible approach to obtain an \XP algorithm (which we believe to exist) would be to extend the general \XP algorithm of Rao~\cite{Rao07} parameterized by clique-width for vertex partitioning problems expressible in monadic second-order logic so to take into account the parities of the degrees, that is, to {\sl counting} monadic second order logic.

Toward an eventual $\W[1]$-hardness proof for \textsc{Odd Chromatic Number} by rank-width, a natural strategy is to try to adapt the reduction given by Fomin et al.~\cite{FominGLS10} to prove that \textsc{Chromatic Number} is $\W[1]$-hard by clique-width (hence, rank-width). This reduction is from \textsc{Equitable Coloring} parameterized by the number of colors plus tree-width, proved to be $\W[1]$-hard by Fellows et al.~\cite{FellowsFLRSST11}. By appropriately modifying the chain of reductions given in~\cite{FellowsFLRSST11}, we have only managed to prove that the naturally defined  \textsc{Odd Equitable Coloring} problem is $\W[1]$-hard by tree-width, but not if we add  the number of colors as a parameter.

Concerning \textsc{Odd $q$-Coloring} parameterized by tree-width, a straightforward dynamic programming algorithm that guesses, for every vertex, its color class and the parity of its degree within that class, runs in time $\Ocal^*((2q)^{\tw})$. Note that this algorithm together with Theorem~\ref{thm:Xodd-tw} yield an algorithm for \textsc{Odd Chromatic Number} in time $\Ocal^*((2\tw + 2)^{\tw})$. By the lower bound under the \ETH of Lokshtanov et al.~\cite{LokshtanovMS18ETH} for \textsc{Chromatic Number} by tree-width and the fact that our reduction of Theorem~\ref{thm:hard-qCol}  preserves tree-width, it follows that the dependency on tree-width of this algorithm is asymptotically optimal under the \ETH. It would be interesting to prove lower bounds under the \emph{Strong Exponential Time Hypothesis} (\SETH). Note that our reduction of Theorem~\ref{thm:hard-qCol} together with the lower bound under the \SETH of Lokshtanov et al.~\cite{LokshtanovMS18SETH} for \textsc{$q$-Coloring} by tree-width yield a lower bound for \textsc{Odd $q$-Coloring} of $\Ocal^*((q-\varepsilon)^{\tw})$ under the \SETH.

A natural direction is to study the complexity of  \textsc{Odd $q$-Coloring}, \textsc{Odd Chromatic Number}, \textsc{Maximum Even Subgraph}, and \textsc{Maximum Odd Subgraph} on restricted graph classes, such as split, interval, or chordal graphs. Concerning the \textsc{Even Subgraph} problem, given that in an $n$-vertex graph there always exists an even induced subgraph of size at least $n/2$~\cite{Lov79}, it makes sense to consider the parameterization of the problem above this lower bound, that is, ask for the existence of an even induced subgraph of size at least $n/2 + k$, $k$ being the parameter.

From a broader point of view, concerning problems that are \FPT by rank-width such as \textsc{Independent Set}, \textsc{Dominating Set}, \textsc{$q$-Coloring}, and  \textsc{Feedback Vertex Set}, the currently fastest algorithms run in time $\Ocal^*(2^{\Ocal(\rw^2)})$~\cite{Bui-XuanTV10,BergougnouxK19}, but the lower bounds under the  \ETH are just $\Ocal^*(2^{o(\rw)})$, from the classical linear \NP-hardness reductions from \textsc{3-Sat}. To improve the lower bounds, one should probably construct graph families of rank-width $\Ocal(\sqrt{n})$ and boolean-width $\Theta(n)$, as these problems can be solved in single-exponential time parameterized by boolean-width~\cite{Bui-XuanTV11,BergougnouxK19}. In particular, this would answer a question of Bui-Xuan et al.~\cite{Bui-XuanTV11} about the relation between rank-width and boolean-width. A related question is to determine which graphs, other than sparse ones, have rank-width $\Ocal(\sqrt{n})$.

Concerning problems that are $\W[1]$-hard by rank-width (or clique-width), such as  \textsc{Chromatic Number}, \textsc{Edge Dominating Set}, \textsc{Maximum Cut}, and \textsc{Hamiltonian Path}, the lower bounds that follow directly from the ones for clique-width~\cite{FominGLS10,FominGLSZ19,FominGLS14} leave a huge gap for rank-width. Closing this gap looks like a challenging problem.

Note that the problems that we considered can be seen as  the ``parity version'' of \textsc{Independent Set} and \textsc{$q$-Coloring}. It is natural to consider the parity version of other classical problems. In Appendix~\ref{sec:DS} we present some results on the parity version of domination problems, where the main contribution is to adapt the dynamic programming algorithms of Section~\ref{sec:DP} to these problems. It seems plausible that these results could also be lifted to the parity version of the vertex partitioning problems considered in~\cite{Bui-XuanTV13}.

\medskip

\noindent \textbf{Acknowledgement}. We would like to thank the anonymous referees for helpful and thorough comments that improved the presentation of the manuscript.

\bibliographystyle{abbrv}
\bibliography{biblio}

\newpage
\begin{appendix}
\section{Parity version of domination problems}
\label{sec:DS}
In this section we consider the ``parity version'' of domination problems. For simplicity, we just deal with the ``odd'' versions. Namely, an \emph{odd dominating set} (resp. \emph{odd total dominating set}) of a graph $G$ is a set $S \subseteq V(G)$  such that every vertex in $V(G) \setminus S$ (resp. $V(G)$) has an odd number of neighbors in $S$. Accordingly, in the \textsc{Minimum Odd Dominating Set} (resp. \textsc{Minimum Odd Total Dominating Set})  problem, we are given a graph $G$ and the objective is to find an odd dominating set (resp. odd total dominating set) in $G$ of minimum size.

It is worth mentioning that what is usually called an ``odd dominating set'' in the literature (cf. for instance~\cite{CaKl03} and the references given in~\cite{GravierJMP15,CaroKY05}) differs from the definition given above. Indeed, in previous work a set $S \subseteq V$ is said to be an odd dominating set if $|N[v] \cap S|$ is odd for {\sl every} vertex $v \in V(G)$. That is, vertices outside of $S$ must have an odd number of neighbors in $S$, while vertices in $S$ must have an even number of neighbors in $S$. Note that this definition differs from both definitions given in the above paragraph.

Concerning \textsc{Odd Total Dominating Set}, it was studied --among other parity problems-- by Halld{\'{o}}rsson et al.~\cite{HalldorssonKT00}, who proved its \NP-hardness by a reduction from the \textsc{Codeword of Minimal Weight} problem. However, the (quite involved) \NP-hardness proof of this latter problem by Vardy~\cite{Vardy97} involves several nonlinear blow-ups, so a lower bound of $2^{o(n)}$ under the \ETH cannot be deduced from it. Fortunately, we can indeed obtain a linear \NP-hardness reduction for both  \textsc{Minimum Odd Dominating Set} and \textsc{Minimum Odd Total Dominating Set} by doing simple local modifications  to the proof of Sutner~\cite[Theorem 3.2]{Sutner88a} for a variant of odd total domination, which is from the \textsc{$3$-Sat} problem. We omit the details here.

Once we know that none of these problems can be solved in time $2^{o(n)}$ on $n$-vertex graphs under the \ETH, our main contribution in this section is to adapt the dynamic programming algorithms presented in Section~\ref{sec:DP} to solve both \textsc{Minimum Odd Dominating Set} and \textsc{Minimum Odd Total Dominating Set} in single-exponential time parameterized by the rank-width of the input graph. Namely, we prove the following two results.

\begin{theorem}
\label{thm:odd-DS-rw}
Given a graph $G$ along with a decomposition tree of rank-width $w$, the \textsc{Minimum Odd Dominating Set} problem can be solved in time $\Ocal^*(2^{3\rw})$.
\end{theorem}

\begin{proof}
The algorithm and its proof are nearly identical to the ones for \textsc{Maximum Odd Subgraph} (cf. Theorem~\ref{thm:odd-subgraph-rw}), replacing condition $(\maltese)$ with
\[T_A[R,R'] = \underset{S\subseteq A}{\max}\{ |S| : S \equiv_2^A R \wedge \{v\in A\setminus S: |N(v)\cap S| \text{ is even}\} = A\cap N_2(R')\setminus S\},\] and the leaves are instead defined as  $T_{\{u\}}[\emptyset,\{v\}] = \emptyset$ and $T_{\{u\}}[\{u\},\emptyset]= T_{\{u\}}[\{u\},\{v\}] = \{u\}$. $T_{\{u\}}[\emptyset,\emptyset]$ is left empty.
\end{proof}

\begin{theorem}
\label{thm:odd-TDS-rw}
Given a graph $G$ along with a decomposition tree of rank-width $w$, the \textsc{Minimum Odd Total Dominating Set} problem can be solved in time $\Ocal^*(2^{3\rw})$.
\end{theorem}

\begin{proof}
Again, the algorithm and its proof are nearly identical to the ones for \textsc{Maximum Odd Subgraph} (cf. Theorem~\ref{thm:odd-subgraph-rw}), replacing condition $(\maltese)$ with
\[T_A[R,R'] = \underset{S\subseteq A}{\max}\{ |S| : S \equiv_2^A R \wedge \{v\in A: |N(v)\cap S| \text{ is even}\} = A\cap N_2(R')\},\] and the leaves are instead defined as  $T_{\{u\}}[\emptyset,\{v\}] = \emptyset$ and $T_{\{u\}}[\{u\},\{v\}] = \{u\}$. $T_{\{u\}}[\emptyset,\emptyset]$ and $T_{\{u\}}[\{u\},\emptyset]$ are left empty.
\end{proof}

As future work, it would be interesting to adapt the above algorithms to deal with the {\sl connected} version of both problems, where the (total) odd dominating set $S$ is further required to induce a connected graph; see~\cite{CaroKY05} for related work about this variant of domination.

%


%

\end{appendix}

\end{document}